\newcommand{\seq
}[1]{\mathbf{#1}}
\newtheorem{theorem}{\bf Theorem}
\newtheorem{prop}[theorem]{\bf Proposition}
\newtheorem{corollary}[theorem]{\bf Corollary}
\newtheorem{lemma}[theorem]{\bf Lemma}
\newtheorem{definition}[theorem]{\bf Definition}
\long\def\symbolfootnote[#1]#2{\begingroup
\def\thefootnote{\fnsymbol{footnote}}\footnote[#1]{#2}\endgroup}
\begin{document}
\title{Protocol Sequences for Multiple-Packet Reception}

\author{Yijin Zhang, Yuan-Hsun Lo, Feng Shu and Wing Shing Wong
\thanks{
The material in this paper was presented in part at the IEEE International Symposium on Information Theory, Honolulu, HI, July 2014.
This work was partially supported by the National Natural Science Foundation of China (No. 61301107, 61174060, 61271230), the Hong Kong RGC Earmarked Grant
CUHK414012 and the Shenzhen Knowledge Innovation Program JCYJ20130401172046453.}
\thanks{Y. Zhang and F. Shu are with the School of Electronic and Optical Engineering, Nanjing University of Science and Technology, Nanjing, China.
E-mail: yijin.zhang@gmail.com; shufeng@njust.edu.cn.}
\thanks{Y.-H. Lo is with the Department of Mathematics, National Taiwan Normal University, Taipei, Taiwan. E-mail: yhlo0830@gmail.com.}
\thanks{W. S. Wong is with the Department of Information Engineering, The Chinese University of Hong Kong, Shatin, N.~T., Hong Kong.
E-mail: wswong@ie.cuhk.edu.hk.}
}
\maketitle

\begin{abstract}
Consider a time slotted communication channel shared by $K$ active users and a single receiver.
It is assumed that the receiver has the ability of the multiple-packet reception (MPR) to correctly receive at most $\gamma$ ($1 \leq \gamma < K$) simultaneously transmitted packets.
Each user accesses the channel following a specific periodical binary sequence, called the protocol sequence, and transmits a packet within a channel slot if and only if the sequence value is equal to one.
The fluctuation in throughput is incurred by inevitable random relative shifts among the users due to the lack of feedback.
A set of protocol sequences is said to be throughput-invariant (TI) if it can be employed to produce invariant throughput for any relative shifts, i.e., maximize the worst-case throughput.
It was shown in the literature that the TI property without considering MPR (i.e., $\gamma=1$) can be achieved by using shift-invariant (SI) sequences, whose generalized Hamming cross-correlation is independent of relative shifts.
This paper investigates TI sequences for MPR; results obtained include achievable throughput value, a lower bound on the sequence period, an optimal construction of TI sequences that achieves the lower bound on the sequence period, and intrinsic structure of TI sequences.
In addition, we present a practical packet decoding mechanism for TI sequences that incorporates packet header, forward error-correcting code, and advanced physical layer blind signal separation techniques.

\end{abstract}
\begin{IEEEkeywords}
Collision channel without feedback, protocol sequences, multiple-packet reception, throughput.
\end{IEEEkeywords}

\section{Introduction}
The idea of using {\em protocol sequences} to define deterministic multiaccess protocols for a collision channel without feedback was proposed by Massey and Mathys in~\cite{Massey85}, and recently has attracted many research revisits~\cite{Nguyen92,GyorfiVajda93,Wong07,CWS08,SCSW,CRT} for different design criteria and applications.
Compared with time division multiple access (TDMA), ALOHA and carrier sense multiple access (CSMA), protocol sequences do not require stringent synchronization, channel monitoring, backoff algorithm and packet retransmissions.
Such a simplicity is particularly desirable in ad hoc networks and sensor networks, in which well-coordinated transmissions and time synchronization may be difficult to achieve due to user mobility, time-varying propagation delays and energy constraints.
Moreover, in contrast to the random and contention based schemes, protocol sequences in~\cite{Nguyen92,GyorfiVajda93,Wong07,CWS08,SCSW,CRT} can respectively provide a positive short-term throughput guarantee with probability one in the worst case.

Previous studies on multiaccess protocols have traditionally assumed a collision channel model of {\em single-packet reception} (SPR), in which a packet is received correctly only if it is not involved in a collision, i.e., does not overlap with another.
However, the assumption of SPR becomes more and more unsuitable in practice, due to recent advances at reception techniques of the physical (PHY) layer, such as antenna arrays, CDMA technique and beamforming algorithms, which can be employed to ensure that the receiver has the ability of {\em multiple-packet reception} (MPR)~\cite{Ghez} to receive multiple packets simultaneously.
In this paper, we restrict our attention to protocol sequences for the $\gamma$-user MPR channel, in which a packet can be received error-free only if at most $\gamma-1$ other packets are being transmitted simultaneously.
We refer to $\gamma$ as the {\em MPR capability} which has been commonly assumed in~\cite{ZZL09,Babich10,AlohaMPR13,MPR_MAC13,Chan13} for studying ALOHA and CSMA.
However, it is expected that protocol sequences will also behave differently from what were reported in~\cite{Nguyen92,GyorfiVajda93,Wong07,CWS08,CRT,SCSW} for $\gamma=1$ (i.e., SPR).
Until only recently there has been no research published on protocol sequences for MPR.
Only one related result~\cite{Tinguely05} stated that there is no need of employing protocol sequences for a channel with a sufficiently large recovering probability of collided packets, which is different from the MPR capability discussed here.

The {\em effective throughput} of a user under the MPR capability $\gamma$ is defined as the fraction of packets it can be sent out without suffering any collision in which more than $\gamma$ users are involved.
Due to a lack of feedback information, the relative shifts among users are unknown to each other, and thus incur performance variation in throughput.
As argued in~\cite{CWS08,SCSW}, our main goal of designing protocol sequences is to maximize the worst-case individual throughput for any relative shifts, i.e., minimize the variation in throughput.
If the throughput of each user is constant and independent of any relative shifts, then the assigned set of protocol sequences is said to be {\em throughput-invariant} (TI).
It was shown in~\cite{Massey85,CWS08,SCSW} that this zero-variance on throughput for $\gamma=1$ can be achieved by using {\em shift-invariant} (SI) sequences, which enjoy the special property that their {\em generalized Hamming cross-correlation} is independent of relative shifts.
The generalized Hamming cross-correlation here is a generalization of {\em pairwise Hamming cross-correlation} and defined for all nonempty subsets of users.
However, the question of whether there exist TI sequences for $\gamma=1$ which are shorter than SI sequences has not been answered.
Moreover, the impact of MPR capability on the throughput performance and sequence design of TI sequences is not explored either.
As such, this paper is a first attempt to study TI sequences for MPR, and can be viewed as a generalization of results in~\cite{SCSW}.

The remainder of this paper is organized as follows.
After setting up the channel model and notation in Section II, we in Section III derive the throughput value of a system with MPR capability.
In Section IV, lower bounds on the period of TI sequences with MPR capability are presented.
In Section V, we use a known construction of SI sequences to design TI sequences for any MPR capability, which are optimal in the sense that the sequence period achieves the lower bound.
Section VI proposes a mechanism of identification and decoding for TI sequences.
It is further shown in Section VII that TI sequences must be SI in many specific cases, which indicates that the SI sequence set is the unique solution to the TI problem with MPR.
Finally, in Section VIII we close the discussion with some concluding remarks.

\section{Channel Model and Notation}
Consider a time slotted communication channel shared by $K$ active users and a single receiver.
It is assumed that each of these users always has a fixed-length packet to send, knows the slot boundaries and transmits its packet within a channel slot. However, these users do not know the relative shifts of other users.
Let $\tau_i$, an integer measured in unit of slot duration, denote the relative shift of user $i$ for $i=1,2,\ldots,K$.
Following~\cite{Massey85}, we define a deterministic binary sequence, $\seq{s}_i:= [ s_i(0)\ s_i(1)\ \ldots\ s_i(L-1)]$, called {\em protocol sequence} to schedule the packet transmission of user~$i$ for $i=1,2,\ldots,K$, where $L$ is the common sequence period of all $K$ sequences.
We consider that the channel is slot-synchronous so that there exists a system-wide slot labeling, $t$, and user $i$ transmits a packet at slot $t$ if $s_i(t+\tau_i) = 1$, and keeps silent if $s_i(t+\tau_i) = 0$, in which the addition by $\tau_i$ is in modulo $L$ arithmetic.

Following the assumption of the $\gamma$-user MPR channel, we focus on the non-trivial case in which $\gamma<K$.
A transmitted packet is correctly received if at most $\gamma-1$ other packets are being transmitted at the same slot, and is considered lost otherwise.
For practical considerations, the users can employ forward error-correcting code across packets to recover data lost, as explained in Section VI.

Some notation and definitions used in this paper are stated below.

\begin{definition}
For $i=1,\ldots,K$, define the \emph{duty factor} $f_i$ of a protocol sequence $\seq{s}_i$, as the fraction of ones in $\seq{s}_i$, namely,
\[
f_i:=\frac{1}{L}\sum_{t=0}^{L-1}s_i(t).
\]
The \emph{cyclic shift} of $\seq{s}_i$ by $\tau_i$ is defined as
\[
\seq{s}_i^{(\tau_i)}:= [ s_i(\tau_i)\ s_i(1+\tau_i)\ \ldots\ s_i(L-1+\tau_i) ],
\]
where the addition is taken modulo $L$.
Note that the $t$-th bit of $\seq{s}_i^{(\tau_i)}$ is denoted by $s_i(t+\tau_i)$.
\end{definition}

\begin{definition}
Let $b_j \in \{0,1\}$ for $j=1,\ldots,K$.
For a system with the MPR capability $\gamma$, the \emph{throughput} of user $i$ with the assigned sequence $\seq{s}_i$ for given relative shifts $\tau_1,\ldots,\tau_K$ is defined as
\begin{equation}
R_i(\tau_1,\tau_2,\ldots,\tau_K)=\frac{1}{L} \sum_{b_i=1, \atop q_i\leq \gamma-1} N(b_1,\ldots,b_K|\seq{s}_1^{(\tau_1)},\ldots,\seq{s}_K^{(\tau_K)}),
\label{R}
\end{equation}
in which $q_i=\sum_{j\neq i}b_j$ and $N(b_1,\ldots,b_K|\seq{s}_1^{(\tau_1)},\ldots,\seq{s}_K^{(\tau_K)})$ denotes the number of time indices $t$, $0 \leq t < L$, such that $s_j(t+\tau_j)= b_j$ for all $j$.
This computes the fraction of time slots in which at most $\gamma$ users including user $i$ are transmitting.
Note that the summation in~\eqref{R} is taken over $(b_1,\ldots,b_K)$ with $b_i=1, q_i\leq \gamma-1$.

A sequence set $\{\seq{s}_1,\seq{s}_1,\ldots,\seq{s}_K\}$ is said to be TI with the MPR capability $\gamma$ if $R_i(\tau_1,\tau_2,\ldots,\tau_K)$ is a constant function of $\tau_1, \tau_2,\ldots, \tau_K$ for any $i$.
For simplicity, we sometimes use $R_i$ to denote the throughput of user $i$.
To avoid the uninteresting cases, in this paper we only consider TI sequences that ensure $R_{i}$ is strictly bigger than zero for any $i$.
\end{definition}

\begin{definition}
We identify the $K$ users by means of the index set $$\mathcal{K} := \{1,2,\ldots, K\}.$$
Let $\mathcal{O}_K$ be the set
\[
 \bigcup_{n=1}^{K} \left\{ (i_1, \ldots, i_n) \in\mathcal{K}^n:\, i_1 <i_2< \cdots< i_n \right\}.
\]
An element in $\mathcal{O}_K$ corresponds to an ordered tuple of users.
For $\mathsf{A} = (i_1,i_2,\ldots,i_n) \in \mathcal{O}_K$, the \emph{generalized Hamming cross-correlation} associated with $\mathsf{A}$ is defined as
\[
 H(\tau_{i_1}, \ldots, \tau_{i_n}; \mathsf{A}) := \sum_{t=0}^{L-1} \prod_{j=1}^n s_{i_j}(t+\tau_{i_j}).
\]
If $|\mathsf{A}|=2$, the generalized Hamming cross-correlation is reduced to the {\em pairwise Hamming cross-correlation} function.

We further introduce the following definitions by means of the generalized Hamming cross-correlation:

(i) Given an ordered tuple $\mathsf{A} \in \mathcal{O}_K$, then $H(\tau_{i_1}, \ldots, \tau_{i_n}; \mathsf{A})$ is said to be SI if it is a constant for any $\tau_{i_1}, \ldots, \tau_{i_n}$.

(ii) A sequence set is said to be SI~\cite{SCSW} if $H(\tau_{i_1}, \ldots, \tau_{i_n}; \mathsf{A})$ is SI for every $\mathsf{A}$ in $\mathcal{O}_K$.

(iii) A sequence set is said to be {\em pairwise} SI~\cite{ZSW} if $H(\tau_{i_1}, \ldots, \tau_{i_n}; \mathsf{A})$ is SI for every $\mathsf{A}$ in $\mathcal{O}_K$ with $|\mathsf{A}|=2$.
\end{definition}

\section{Throughput of TI sequences}
Let $\seq{s}_1,\seq{s}_2,\ldots,\seq{s}_n$ be $n$ binary sequences with a common period $L$.
For $b_1,\ldots,b_n\in\{0,1\}$, Shum et al.~\cite{SCSW} showed that
\begin{equation}
\sum_{\tau_1=0}^{L-1}\ldots \sum_{\tau_n=0}^{L-1} N(b_1,\ldots,b_n|\seq{s}_1^{(\tau_1)},\ldots,\seq{s}_n^{(\tau_n)})=L\prod_{i=1}^nN(b_i|\seq{s}_i).
\label{eq:Shum09}
\end{equation}

The main result in this section is summarized in the following.

\begin{theorem}\label{thm:throughput}
Let $\seq{s}_1,\seq{s}_2,\ldots,\seq{s}_K$ be $K$ TI sequences with the MPR capability $\gamma$, $1 \leq \gamma < K$, and duty factors $f_1,f_2,\ldots,f_K$ respectively.
Then we have
\begin{equation}
R_i=f_i \sum_{\mathsf{H}\subseteq \mathcal{K} \setminus \{i\} \atop |\mathsf{H}|< \gamma} \prod_{j\in \mathsf{H}}f_j \prod_{k\in \mathcal{K} \setminus (\{i\} \cup \mathsf{H})}(1-f_k).
\end{equation}
\label{T}
\end{theorem}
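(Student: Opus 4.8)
The plan is to exploit the defining feature of TI sequences---namely that $R_i(\tau_1,\ldots,\tau_K)$ does not depend on the shifts---by replacing $R_i$ with its average over all $L^K$ shift configurations, which is legitimate because a constant function equals its own mean. Thus I would start from
\[
R_i=\frac{1}{L^{K}}\sum_{\tau_1=0}^{L-1}\cdots\sum_{\tau_K=0}^{L-1}R_i(\tau_1,\ldots,\tau_K),
\]
substitute the definition \eqref{R}, and interchange the (finite) summation over the binary tuple $(b_1,\ldots,b_K)$ with the summation over the shifts. The inner shift-sum is then exactly the quantity treated by the identity \eqref{eq:Shum09} of Shum et al., taken with $n=K$.

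The second step applies \eqref{eq:Shum09} to collapse the $K$-fold shift sum, giving
\[
\sum_{\tau_1,\ldots,\tau_K}N(b_1,\ldots,b_K\mid\seq{s}_1^{(\tau_1)},\ldots,\seq{s}_K^{(\tau_K)})=L\prod_{j=1}^{K}N(b_j\mid\seq{s}_j).
\]
Since $N(1\mid\seq{s}_j)=Lf_j$ and $N(0\mid\seq{s}_j)=L(1-f_j)$ directly from the definition of the duty factor, one has $\prod_{j=1}^{K}N(b_j\mid\seq{s}_j)=L^{K}\prod_{j:\,b_j=1}f_j\prod_{j:\,b_j=0}(1-f_j)$. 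Combined with the factor $L$ from \eqref{eq:Shum09} and the prefactor $1/L^{K+1}$ (coming from $1/L^{K}$ times the $1/L$ in \eqref{R}), every power of $L$ cancels, leaving $R_i$ as a sum of products of the $f_j$ and $(1-f_j)$ over the admissible tuples.

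The last step is a relabelling to match the stated form. Because the outer sum is restricted to $b_i=1$, I would factor out $f_i$ and set $\mathsf{H}:=\{\,j\neq i:\,b_j=1\,\}$ to record which other users transmit; then $|\mathsf{H}|=q_i$, so the constraint $q_i\leq\gamma-1$ becomes $|\mathsf{H}|<\gamma$, and the surviving products become $\prod_{j\in\mathsf{H}}f_j\prod_{k\in\mathcal{K}\setminus(\{i\}\cup\mathsf{H})}(1-f_k)$, which is precisely the claimed expression. I expect the only genuinely substantive point to be the opening reduction: the averaging argument is valid \emph{only} because the TI hypothesis forces $R_i$ to be constant, so this is where the assumption does all the work, while everything after it is a direct invocation of \eqref{eq:Shum09} followed by routine bookkeeping.
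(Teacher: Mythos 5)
Your proposal is correct and matches the paper's own proof in all essentials: the paper likewise invokes the TI hypothesis to replace $R_i$ by its average over uniformly distributed shifts (phrased as an expectation over random $\tau_1,\ldots,\tau_K$), collapses the shift sum via the identity \eqref{eq:Shum09}, and performs the same $f_i$-factoring and subset relabelling with $\mathsf{H}$. Your only cosmetic difference is stating the constant-equals-mean step up front rather than at the end, which changes nothing substantive.
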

\begin{proof}
Suppose that the relative shifts of $\seq{s}_i$ is $\tau_i$, for $i=1,2,\ldots,K$.
We can treat $\tau_1, \tau_2,\ldots ,\tau_K$ as independent and uniformly distributed random variables that are equally likely to take on any of $L$ values: $0,1,\ldots,L-1$.
After taking the expectation over $(\tau_1, \tau_2,\ldots ,\tau_K)$, we obtain the average throughput of user $i$ as the following:
\begin{align}
\mathbb{E} &\Big ( R_i(\tau_1,\tau_2,\ldots,\tau_K) \Big )  \notag \\
&=\frac{1}{L}\mathbb{E}\Big ( \sum_{q_i\leq \gamma-1,b_i=1} N(b_1,\ldots,b_K|\seq{s}_1^{(\tau_1)},\ldots,\seq{s}_K^{(\tau_K)})\Big) \label{eq:N}\\
&=\frac{1}{L}\sum_{q_i\leq \gamma-1,b_i=1} \mathbb{E}\Big ( N(b_1,\ldots,b_K|\seq{s}_1^{(\tau_1)},\ldots,\seq{s}_K^{(\tau_K)})\Big) \notag \\
&=\frac{1}{L}\sum_{q_i\leq \gamma-1,b_i=1} \frac{1}{L^{K-1}}\prod_{j=1}^KN(b_j|\seq{s}_j) \label{eq:N2} \\
&=\frac{1}{L^{K-1}} \sum_{q_i\leq \gamma-1}\frac{N(1|\seq{s}_i)}{L}  \prod_{j=1,j \neq i}^KN(b_j|\seq{s}_j) \notag \\
&=\frac{1}{L^{K-1}}f_i \sum_{q_i\leq \gamma-1} \prod_{j=1,j \neq i}^KN(b_j|\seq{s}_j) \notag \\
&=f_i \sum_{\mathsf{H}\subseteq \mathcal{K} \setminus \{i\} \atop |\mathsf{H}|< \gamma} \prod_{j\in \mathsf{H}}f_j \prod_{k\in \mathcal{K} \setminus (\{i\} \cup \mathsf{H})}(1-f_k), \notag
\end{align}
where $q_i=\sum_{j\neq i} b_i$.
\eqref{eq:N} directly follows from~\eqref{R}, while~\eqref{eq:N2} is due to~\eqref{eq:Shum09}.
Furthermore, since $R_i(\tau_1,\ldots,\tau_K)$ is a constant function of $\tau_1, \ldots, \tau_K$, it must be equal to its average value.
This completes the proof.
\end{proof}

Note that Theorem~\ref{thm:throughput} is a generalization of \cite[Thm. 3]{SCSW}, which focuses on the case of $\gamma=1$.

For the symmetric case that each user has the same duty factor $f$, all users have the same throughput:
\begin{equation}
\sum_{j=0}^{\gamma-1} \binom{K-1}{j} f^{j+1} (1-f)^{K-1-j}.
\label{eq:T}
\end{equation}
The system throughput in the symmetric case is plotted in Fig. 1 for $10 \leq K \leq 50$ with $\gamma=1,5,10$ and $f=1/10,1/20$, respectively.
\begin{figure}
\begin{center}
  \includegraphics[width=4in]{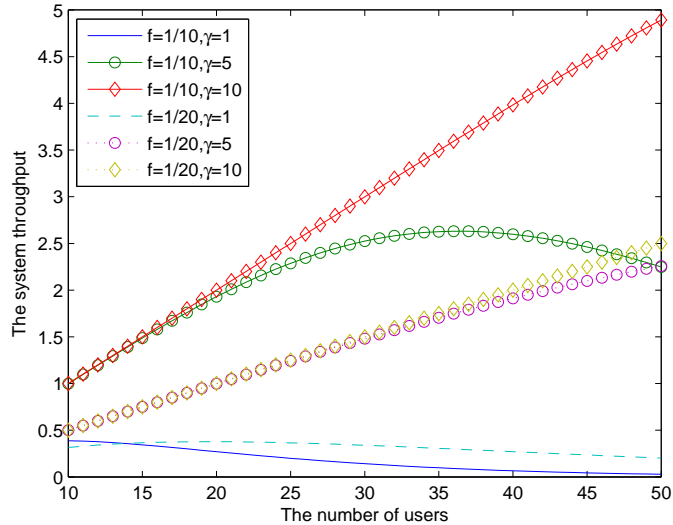}
\end{center}
\caption{The symmetric system throughput for $10 \leq K \leq 50$.}
\label{fig:throughput}
\end{figure}
We can see from \eqref{eq:T} that there is an optimal duty factor for maximizing the throughput of a given user number. For $\gamma=1$ the optimal value is $1/K$, but for the other cases the closed-form expression is difficult to obtain. See numerical results of optimal $f$ in Fig.~\ref{fig:optD} for $K=20$ and $\gamma=1,2,3,5,8,10,12,15$.

\begin{figure}
\begin{center}
  \includegraphics[width=4in]{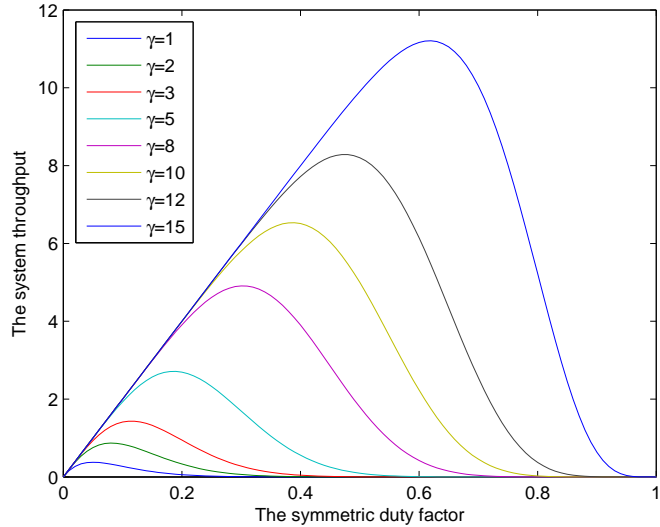}
\end{center}
\caption{The symmetric system throughput for $K=20$ and different $\gamma$.}
\label{fig:optD}
\end{figure}

\section{Lower bound on Minimum Period}
A long sequence period could result in large variation in throughput on a short-time scale.
With a weak assumption on duty factors, we derive lower bounds on the period of TI sequences for any $\gamma$ in this section.
These are clearly constraints on constructing TI sequences for small $L$ values.
Let $\gcd(x, y)$ denote the greatest common divisor of integers $x$ and $y$.

\begin{definition} \label{def:lambda_theta}
Consider $K$ binary sequences $\seq{s}_1, \seq{s}_2, \ldots, \seq{s}_K$ with a common period $L$.
Given an ordered tuple $\mathsf{A}=(i_1,\ldots,i_M)\in\mathcal{O}_K$ and relative shifts $\tau_{i_1}, \ldots, \tau_{i_M}$ for some $M\leq K$, let $\theta_j(\tau_{i_1}, \ldots, \tau_{i_M};\mathsf{A})$, for $j=0,1,\ldots,M$, denote the number of time indices $t$, $0\leq t<L$, such that there are exactly $j$ `1's among ${s}_{i_1}(t+\tau_{i_1}), \ldots, {s}_{i_M}(t+\tau_{i_M})$.
Then, define
\[
\theta_{\leq j}(\tau_{i_1}, \ldots, \tau_{i_M};\mathsf{A}) = \sum_{k=0}^j \theta_k(\tau_{i_1}, \ldots, \tau_{i_M};\mathsf{A}).
\]
Note that $\theta_M(\tau_{i_1}, \ldots, \tau_{i_M};\mathsf{A}) = H(\tau_{i_1}, \ldots, \tau_{i_M};\mathsf{A})$.
\end{definition}


\begin{prop}
Consider a set of $K$ sequences which is TI with the MPR capability $\gamma$ ($1 \leq \gamma <K$).
If $R_i>0$ for any $i$, then there are at most $\gamma-1$ all-one sequences.
\label{prop:allone}
\end{prop}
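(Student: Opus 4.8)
The plan is to argue by contradiction using the closed-form throughput expression from Theorem~\ref{thm:throughput}. First I would record the basic observation that an all-one sequence $\seq{s}_j$ has duty factor $f_j=1$, hence $1-f_j=0$. Writing $\mathcal{A}\subseteq\mathcal{K}$ for the set of indices of all-one sequences and $m:=|\mathcal{A}|$, I would inspect a single summand of
\[
R_i=f_i \sum_{\mathsf{H}\subseteq \mathcal{K} \setminus \{i\},\ |\mathsf{H}|< \gamma} \prod_{j\in \mathsf{H}}f_j \prod_{k\in \mathcal{K} \setminus (\{i\} \cup \mathsf{H})}(1-f_k).
\]
The summand indexed by $\mathsf{H}$ carries the factor $(1-f_k)$ for every $k\in\mathcal{K}\setminus(\{i\}\cup\mathsf{H})$; since all factors are nonnegative, the whole summand vanishes unless every all-one index $k\neq i$ already lies in $\{i\}\cup\mathsf{H}$, that is, unless $\mathcal{A}\setminus\{i\}\subseteq\mathsf{H}$. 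Thus $R_i>0$ forces the existence of at least one admissible $\mathsf{H}$ (one with $|\mathsf{H}|<\gamma$) containing $\mathcal{A}\setminus\{i\}$, which in turn forces $|\mathcal{A}\setminus\{i\}|\leq|\mathsf{H}|\leq\gamma-1$.

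Next I would suppose, for contradiction, that $m\geq\gamma$, and exhibit a single user $i$ for which $|\mathcal{A}\setminus\{i\}|\geq\gamma$, so that by the previous paragraph no admissible $\mathsf{H}$ can contain $\mathcal{A}\setminus\{i\}$ and hence $R_i=0$. If not all sequences are all-one, I pick any $i\notin\mathcal{A}$; then $\mathcal{A}\subseteq\mathcal{K}\setminus\{i\}$ gives $|\mathcal{A}\setminus\{i\}|=m\geq\gamma$. If instead every sequence is all-one, then $m=K$, and the standing hypothesis $\gamma<K$ yields $K-1\geq\gamma$, so picking any $i$ gives $|\mathcal{A}\setminus\{i\}|=K-1\geq\gamma$. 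In either case every summand of $R_i$ vanishes, forcing $R_i=0$ and contradicting the assumption $R_i>0$ for every $i$. Therefore $m\leq\gamma-1$, which is the assertion.

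The argument is short and purely combinatorial once the throughput formula is in hand, so I expect the only point requiring genuine care to be the degenerate case in which all $K$ sequences are all-one: there is then no index outside $\mathcal{A}$ available to select, so one cannot reduce uniformly to a non-all-one user and must instead invoke $\gamma<K$ to guarantee that, even after deleting one index, at least $\gamma$ all-one indices remain. Treating this case separately (rather than attempting one uniform choice of $i$) is what keeps the count clean and avoids an off-by-one slip in the bound $|\mathsf{H}|\leq\gamma-1$.
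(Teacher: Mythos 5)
Your proof is correct, and it takes a genuinely different route from the paper's. The paper argues directly from the channel model in one line: if $\gamma$ all-one sequences exist, then in \emph{every} slot those $\gamma$ users transmit, so any packet sent by one of the remaining $K-\gamma\geq 1$ users (which exist since $\gamma<K$) always collides with at least $\gamma$ others and is lost; hence $R_i=0$ identically for such a user, for every choice of relative shifts. You instead work algebraically from the closed-form expression of Theorem~\ref{thm:throughput}, observing that any summand indexed by $\mathsf{H}$ carries a factor $(1-f_k)=0$ unless $\mathcal{A}\setminus\{i\}\subseteq\mathsf{H}$, which caps $|\mathcal{A}\setminus\{i\}|$ at $\gamma-1$. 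There is no circularity: Theorem~\ref{thm:throughput} is established in Section~III independently of Proposition~\ref{prop:allone}, so you are entitled to use it. What each approach buys: yours stays entirely inside the throughput formula and makes the degenerate all-all-one case ($m=K$, settled via $\gamma\leq K-1$) fully explicit, whereas the paper's phrasing glosses over bookkeeping when more than $\gamma$ sequences are all-one. On the other hand, the paper's argument is more elementary and strictly stronger in one respect: it does not need the TI hypothesis at all (beyond the definition of $R_i$), since it shows $R_i(\tau_1,\ldots,\tau_K)=0$ pointwise for every shift vector, while your route needs TI to invoke Theorem~\ref{thm:throughput}, whose proof equates the constant $R_i$ with its shift-average. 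Under the proposition's stated hypotheses both arguments are equally valid.
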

\begin{proof}
Suppose there are $\gamma$ all-one sequences.
Then all packets from other $K-\gamma$ sequences cannot be received error-free.
It implies $R_i=0$ for some $i$, which contradicts the assumption.
\end{proof}

\begin{theorem}
Let $\gamma$ be any integer with $1 \leq \gamma < K$.
If a set of $K$ binary sequences is TI with the MPR capability $\gamma$, then this sequence set is pairwise SI.
\label{pSI}
\end{theorem}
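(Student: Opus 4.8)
The plan is to fix an arbitrary ordered pair of users $(a,b)$ with $a<b$ and prove that the pairwise correlation $H(\tau_a,\tau_b;(a,b))$ is a constant; since the pair is arbitrary, this is exactly the assertion that the set is pairwise SI. The idea is to extract pairwise information from the TI hypothesis by a marginalization argument. Because $R_a(\tau_1,\ldots,\tau_K)$ is a constant function of all $K$ shifts, it equals its average over any subset of them. I would average $R_a$ over the $K-2$ shifts $\tau_k$ with $k\neq a,b$, holding $\tau_a,\tau_b$ fixed and treating the $\tau_k$ as independent and uniform on $\{0,\ldots,L-1\}$, exactly the device already used in the proof of Theorem~\ref{thm:throughput} via~\eqref{eq:Shum09}. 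The averaged throughput is then a function of $\tau_a,\tau_b$ alone, and it still equals the TI constant.

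The crux is the following computation. For each fixed slot $t$, as the $\tau_k$ range independently over $\{0,\ldots,L-1\}$ the bits $\{s_k(t+\tau_k)\}_{k\neq a,b}$ become independent Bernoulli variables with means $f_k$, and their joint law does not depend on $t$. Writing $X$ for their sum and $\Phi(m):=\Pr[X\le m]$, I would split the count defining $R_a$ according to whether user $b$ is on or off in slot $t$: the slots with $a$ on and $b$ off (there are $Lf_a-H(\tau_a,\tau_b;(a,b))$ of them) contribute with weight $\Phi(\gamma-1)$, while the slots with both $a$ and $b$ on (there are $H(\tau_a,\tau_b;(a,b))$ of them) contribute with weight $\Phi(\gamma-2)$. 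Using $\Phi(\gamma-2)-\Phi(\gamma-1)=-\Pr[X=\gamma-1]$, this collapses to
\[
L\,R_a = L f_a\,\Phi(\gamma-1) - H(\tau_a,\tau_b;(a,b))\cdot\Pr[X=\gamma-1].
\]
Since $R_a$ is the TI constant and $Lf_a\Phi(\gamma-1)$ is also a constant, it follows that $H(\tau_a,\tau_b;(a,b))\cdot\Pr[X=\gamma-1]$ is independent of $\tau_a,\tau_b$.

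The main obstacle, and the step where Proposition~\ref{prop:allone} is needed, is to show that the scalar $\Pr[X=\gamma-1]$ is strictly positive so that it may be cancelled. I would argue that the support of $X$ is precisely $\{\alpha,\alpha+1,\ldots,K-2\}$, where $\alpha$ denotes the number of all-one sequences among the $K-2$ users outside $\{a,b\}$: every user has $f_k>0$ (otherwise $R_k=0$, contradicting the standing assumption), the all-one users contribute $\alpha$ forced ones, and each remaining user has $0<f_k<1$ so every intermediate value occurs with positive probability. Now $\gamma-1\le K-2$ holds because $\gamma<K$, and $\alpha\le\gamma-1$ holds by Proposition~\ref{prop:allone}, which bounds the total number of all-one sequences by $\gamma-1$; hence $\gamma-1$ lies in the support of $X$ and $\Pr[X=\gamma-1]>0$. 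Cancelling this factor shows $H(\tau_a,\tau_b;(a,b))$ is a constant, which completes the proof. I expect the positivity verification to be the only delicate point, precisely because it is where the hypothesis $R_i>0$ and the all-one count of Proposition~\ref{prop:allone} must be combined.
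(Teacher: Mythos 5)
Your proof is correct, and it reaches the conclusion by a route genuinely dual to the paper's. The paper freezes the complementary block $\mathsf{B}=(3,\ldots,K)$ at one hand-crafted system of shifts $\tau^*$ satisfying $\theta_{\gamma-1}(\tau^*_3,\ldots,\tau^*_K;\mathsf{B})>0$ (built from Proposition~\ref{prop:allone} by cyclically shifting exactly $\gamma-1$ ones into the first slot), then randomizes the shifts of the pair, evaluates $\mathbb{E}(LR_1+LR_2)$ through auxiliary indicator sequences and the identity~\eqref{eq:Shum09} (this is the appendix computation behind~\eqref{eq:average_2}), and finally compares two shift systems using the conservation law $\theta_1+2\theta_2=Lf_1+Lf_2$ to force $\sigma\,\theta_{\gamma-1}(\tau^*_3,\ldots,\tau^*_K;\mathsf{B})=0$. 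You do the opposite marginalization: freeze the pair $(\tau_a,\tau_b)$ and randomize the other $K-2$ shifts, which yields the explicit affine relation
\begin{equation*}
L\,R_a \;=\; Lf_a\,\Phi(\gamma-1)\;-\;H(\tau_a,\tau_b;(a,b))\,\Pr[X=\gamma-1],
\end{equation*}
so that constancy of $H$ follows once the coefficient $\Pr[X=\gamma-1]$ is shown to be a strictly positive scalar independent of $(\tau_a,\tau_b)$. Your support argument for this is sound: $f_k>0$ for all $k$ by the paper's standing assumption $R_k>0$, so the support of $X$ is exactly $\{\alpha,\ldots,K-2\}$, and $\alpha\leq\gamma-1\leq K-2$ follows from Proposition~\ref{prop:allone} together with $\gamma<K$ --- i.e., both proofs invoke Proposition~\ref{prop:allone} at the same logical spot, namely to keep the coefficient of the correlation term from vanishing. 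What your version buys: it needs only a single user's throughput $R_a$ rather than the sum $R_1+R_2$, it dispenses with the auxiliary-sequence appendix computation and the $\delta$-bookkeeping between two shift systems, and it handles $K=2$ uniformly (empty sum $X$, $\Pr[X=0]=1$), which the paper sets aside as a separate ``straightforward'' case. What the paper's version buys is that it stays entirely combinatorial --- a single deterministic witness configuration $\tau^*$ suffices, with no probabilistic language beyond the averaging already present in Theorem~\ref{thm:throughput} --- and its two-shift-system mechanics generalize directly to the higher-order arguments of Lemmas~\ref{lem:delta} and~\ref{lem:theta_less_than} in Section VII, whereas your closed-form trick is specific to pairs.
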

\begin{proof}
Denote by $\seq{s}_1,\ldots,\seq{s}_K$ the $K$ sequences.
The proof of the case $K=2$ is straightforward.
For $K>2$, we aim to show that $\theta_2(\tau_{i_1},\tau_{i_2};\mathsf{A})$ is a constant function of $\tau_{i_1},\tau_{i_2}$ for any $\mathsf{A}=(i_1,i_2)\in\mathcal{O}_K$.
Without loss of generality, let $\mathsf{A}=(1,2)$, and let $\mathsf{B}=(3,\ldots,K)$.
That is, we divide these $K$ sequences into two parts: $\mathcal{A}=\{\seq{s}_1,\seq{s}_2\}$ and $\mathcal{B}=\{\seq{s}_3,\ldots,\seq{s}_K\}$.

First, we fix the relative shifts $\tau^*_3,\ldots,\tau^*_K$ of sequences in $\mathcal{B}$ such that $\theta_{\gamma-1}(\tau^*_{3},\ldots,\tau^*_K;\mathsf{B})>0$.
This combination of relative shifts can always be found since there are at most $\gamma-1$ all-one sequences in $\mathcal{B}$ for the case $1 \leq \gamma < K$ following Proposition~\ref{prop:allone}.
Suppose in $\mathcal{B}$ that there are exactly $h$ all-one sequences.
As $h\leq \gamma-1 \leq K-2$, then we cyclically shift some $\gamma-h-1$ sequences which are not all-ones such that the first time slot of them are all equal to one, and cyclically shift the remaining $K-\gamma-1$ sequences such that the first time slot of them are all equal to zero.
Hence there are exactly $\gamma-1$ ones in the first time slot.

For $\tau_1,\tau_2$, we define
\[
T_1(\tau_1,\tau_2) := \sum_{b_1+b_2=1 \atop b_3+\ldots+b_K < \gamma}  N(b_1,\ldots,b_K|\seq{s}_1^{(\tau_1)}, \seq{s}_2^{(\tau_2)}, \seq{s}_3^{(\tau^*_3)},\ldots,\seq{s}_K^{(\tau^*_K)})
\]
and
\[
T_2(\tau_1,\tau_2) := \sum_{b_1+b_2=2 \atop b_3+\ldots+b_K < \gamma-1}  N(b_1,\ldots,b_K|\seq{s}_1^{(\tau_1)}, \seq{s}_2^{(\tau_2)}, \seq{s}_3^{(\tau^*_3)},\ldots,\seq{s}_K^{(\tau^*_K)}).
\]
We assume the relative shifts $\tau_1, \tau_2$ are uniformly distributed in $0,1,\ldots,L-1$.
After taking the expectation over $\tau_1, \tau_2$, we have
\begin{align}
\mathbb{E}( LR_1+LR_2) =& \mathbb{E}(T_1(\tau_1,\tau_2)+2T_2(\tau_1,\tau_2))=\mathbb{E}(T_1(\tau_1,\tau_2))+2\mathbb{E}(T_2(\tau_1,\tau_2)) \label{eq:average_1} \\
=&\frac{1}{L}\sum_{i=1}^2 i\theta_i(\tau_1,\tau_2;\mathsf{A}) \theta_{\leq \gamma-i}(\tau^*_{3},\ldots,\tau^*_K;\mathsf{B}). \label{eq:average_2}
\end{align}
The proof of (\ref{eq:average_2}) is relegated to Appendix~\ref{sec:app_1}.

We now change the pair of relative shifts from $(\tau_1,\tau_2)$ to $(\tau_1',\tau_2')$.
Let
\begin{equation}
\sigma := \theta_2(\tau_1',\tau_2';\mathsf{A}) - \theta_2(\tau_1,\tau_2;\mathsf{A}). \label{eq:delta_pair_2}
\end{equation}
By the fact that
$$\theta_1(\tau_1,\tau_2;\mathsf{A})+2\theta_2(\tau_1,\tau_2;\mathsf{A})=Lf_1+Lf_2=\theta_1(\tau_1',\tau_2';\mathsf{A})+2\theta_2(\tau_1',\tau_2';\mathsf{A}),$$
we have
\begin{equation}
\theta_1(\tau_1',\tau_2';\mathsf{A})-\theta_1(\tau_1,\tau_2;\mathsf{A}) = -2\sigma. \label{eq:delta_pair_1}
\end{equation}
Since $R_1+R_2$ has zero-variance, by \eqref{eq:average_1} and \eqref{eq:average_2}, we have
\begin{align*}
\sum_{i=1}^2&i \, \theta_i(\tau_1,\tau_2;\mathsf{A}) \theta_{\leq \gamma-i}(\tau^*_{3},\ldots,\tau^*_K;\mathsf{B}) \\
&= \sum_{i=1}^2i \, \theta_i(\tau_1',\tau_2';\mathsf{A}) \theta_{\leq \gamma-i}(\tau^*_{3},\ldots,\tau^*_K;\mathsf{B}).
\end{align*}
Then it follows from~\eqref{eq:delta_pair_2} and \eqref{eq:delta_pair_1} that
\[
\sigma \, \theta_{\gamma-1}(\tau^*_{3},\ldots,\tau^*_K;\mathsf{B}) = 0,
\]
which implies that $\sigma=0$ because of $\theta_{\gamma-1}(\tau^*_{3},\ldots,\tau^*_K;\mathsf{B})>0$ due to the choice of $\tau^*_{3},\ldots,\tau^*_K$.
Thus, $\theta_2(\tau_1,\tau_2;\mathsf{A})$ is a constant function of $\tau_1,\tau_2$.

Since $\theta_2(\tau_1,\tau_2;\mathsf{A}) = H(\tau_1,\tau_2;\mathsf{A})$ and the choice of $\mathsf{A}$ is arbitrary, we conclude that a set of these $K$ sequences is pairwise SI for any $\gamma < K$.
\end{proof}

Compared with the SI property which has been proved as a sufficient condition of a sequence set being TI for $\gamma=1$~\cite{SCSW}, pairwise SI is conceptually a weaker requirement on sequence design.
However, they are known to enjoy the same lower bound on the minimum sequence period for some special form of duty factors.
Given any set of $K$ pairwise SI sequences with duty factors $n_i/d_i$, where $\gcd(n_i,d_i)=1$ for all $i$, Zhang et al. \cite[Thm. 1]{ZSW} proved that its common period is divisible by $d_1d_2\cdots d_K$.
By Theorem~\ref{pSI}, we then have the following result.

\begin{corollary} \label{cor:period}
Let $\gamma$ be an integer with $1 \leq \gamma < K$.
If a set of $K$ binary sequences with duty factors $n_i/d_i$, where $\gcd(n_i,d_i) = 1$ for all $i$, is TI with the MPR capability $\gamma$, then its common period is divisible by $d_1 d_2 \cdots d_K$. In particular, the minimum common period is at least $d_1 d_2 \cdots d_K$.
\end{corollary}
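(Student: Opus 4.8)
The plan is to chain together the two results already in hand, since this corollary is an essentially immediate consequence of them. First I would invoke Theorem~\ref{pSI}: any set of $K$ binary sequences that is TI with the MPR capability $\gamma$, for $1 \le \gamma < K$, is pairwise SI. This is the crucial reduction, as it converts the hypothesis on throughput-invariance---which a priori constrains the joint behavior of all nonempty subsets of users---into the purely combinatorial statement that every pairwise Hamming cross-correlation $H(\tau_{i_1},\tau_{i_2};\mathsf{A})$ is constant as a function of the relative shifts.

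Second, I would apply the cited divisibility theorem of Zhang et al.~\cite[Thm.~1]{ZSW} verbatim. Because the duty factors are expressed in lowest terms as $f_i = n_i/d_i$ with $\gcd(n_i,d_i)=1$, that theorem guarantees that the common period $L$ of any pairwise SI set is divisible by $d_1 d_2 \cdots d_K$. Combining the two steps yields $d_1 d_2 \cdots d_K \mid L$ at once.

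For the ``in particular'' clause I would then observe that $L$ is a positive integer: the sequences have positive period, and under the standing TI assumption each $R_i$ is strictly positive, so the sequence set is nontrivial. Any positive multiple of $d_1 d_2 \cdots d_K$ is at least $d_1 d_2 \cdots d_K$, giving the lower bound $L \ge d_1 d_2 \cdots d_K$ on the minimum common period.

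I expect no genuine obstacle at the level of this corollary, since all the substantive work lives in Theorem~\ref{pSI} and in \cite{ZSW}. The only point requiring care is the interface between the two statements: I would verify that ``pairwise SI'' is precisely the hypothesis required by \cite[Thm.~1]{ZSW}, and that feeding the lowest-terms duty factors $n_i/d_i$ into that theorem produces exactly the divisor $d_1 d_2 \cdots d_K$. Thus the ``hard part'' is merely confirming that the hypotheses align, rather than establishing anything new.
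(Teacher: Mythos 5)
Your proposal matches the paper's proof exactly: the paper derives Corollary~\ref{cor:period} by combining Theorem~\ref{pSI} (TI implies pairwise SI) with the divisibility result of Zhang et al.~\cite[Thm.~1]{ZSW} for pairwise SI sequences with duty factors $n_i/d_i$ in lowest terms, which is precisely your two-step chain. Your additional check that the hypotheses of \cite{ZSW} align, and the observation giving the ``in particular'' clause, are correct and require no further work.
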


With duty factors $n_i/d_i$ where $\gcd(n_i,d_i) = 1$ for all $i$, Corollary~\ref{cor:period} further obtains that the lower bound on the period of TI sequences still grows exponentially with $K$ for any $\gamma$,
although their combinatorial design requirement is different from that of pairwise SI and SI sequences.

\section{An Optimal Construction}

Shum et al.~\cite{SCSW} showed that any SI sequence set is TI for the classical model ($\gamma=1$).
In this section, we extend this property to general $\gamma$ by means of the following result.

\begin{theorem}[{\cite[Thm. 1]{SCSW}}]
The sequence set $\seq{s}_1,\seq{s}_2,\ldots,\seq{s}_K$ is SI if and only if for each choice of $b_1,\ldots,b_K$,
$N(b_1,\ldots,b_K|\seq{s}_1^{(\tau_1)},\ldots,\seq{s}_K^{(\tau_K)})$ is a constant function of $\tau_1,\ldots,\tau_K$.
\label{Con1}
\end{theorem}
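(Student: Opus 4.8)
The plan is to prove both directions of the equivalence through a single linear change of basis between the full occupancy counts $N(\cdot)$ and the generalized Hamming cross-correlations $H(\cdot)$, exploiting the elementary expansion of each factor $1-s_j(t+\tau_j)$. The underlying idea is that $H(\cdots;\mathsf{A})$ is a \emph{marginal} (it counts slots where the users in $\mathsf{A}$ all transmit, ignoring the rest), whereas $N(b_1,\ldots,b_K\mid\cdots)$ is the corresponding \emph{joint} quantity (it fixes the on/off status of every user); the two families are therefore related by inclusion--exclusion over subsets.

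First I would record the \emph{forward} identity. Fix $b_1,\ldots,b_K$ and set $S_1=\{j:b_j=1\}$ and $S_0=\{j:b_j=0\}$. Starting from
\[
N(b_1,\ldots,b_K\mid \seq{s}_1^{(\tau_1)},\ldots,\seq{s}_K^{(\tau_K)}) = \sum_{t=0}^{L-1} \prod_{j\in S_1} s_j(t+\tau_j) \prod_{j\in S_0}\bigl(1-s_j(t+\tau_j)\bigr),
\]
I would expand the second product by inclusion--exclusion to obtain
\[
N(b_1,\ldots,b_K\mid \seq{s}_1^{(\tau_1)},\ldots,\seq{s}_K^{(\tau_K)}) = \sum_{T\subseteq S_0} (-1)^{|T|}\, H\bigl(\cdots;\mathsf{A}_{S_1\cup T}\bigr),
\]
where $\mathsf{A}_{S_1\cup T}$ denotes the ordered tuple on the index set $S_1\cup T$ carrying the corresponding shifts, and the empty tuple contributes the constant $L$. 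If the set is SI, every $H$ on the right is a constant, and since each $H$ depends only on the shifts of the users it involves, the whole of $N$ is a finite signed sum of constants, hence a constant function of $\tau_1,\ldots,\tau_K$. This settles the ``only if'' direction.

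For the \emph{converse} I would invert the relation. For any $\mathsf{A}\in\mathcal{O}_K$, I would sum $N$ over all bit-patterns having $b_j=1$ for every $j\in\mathsf{A}$ and $b_k$ arbitrary for $k\notin\mathsf{A}$. For each fixed slot $t$ the contributions of the free coordinates sum to one, so only the requirement $s_j(t+\tau_j)=1$ for $j\in\mathsf{A}$ survives, yielding
\[
H(\tau_{i_1},\ldots,\tau_{i_n};\mathsf{A}) = \sum_{\substack{b_1,\ldots,b_K\\ b_j=1,\ j\in\mathsf{A}}} N(b_1,\ldots,b_K\mid \seq{s}_1^{(\tau_1)},\ldots,\seq{s}_K^{(\tau_K)}).
\]
Thus if every $N$ is a constant function of the shifts, then each $H$ is a finite sum of constants, so $H(\cdots;\mathsf{A})$ is constant for every $\mathsf{A}\in\mathcal{O}_K$, which is exactly the definition of SI.

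The two displayed identities are mutually inverse triangular relations (a Möbius inversion over the Boolean lattice of user subsets), so in practice only one of them needs to be verified carefully and the other follows formally. I do not anticipate a genuine obstacle: the only points requiring attention are the bookkeeping of the inclusion--exclusion signs in the forward expansion, the verification that the free coordinates sum to unity slot by slot in the converse, and the harmless edge case where $S_1\cup T$ is empty so that $H$ degenerates to the constant $L$.
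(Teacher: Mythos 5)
Your proof is correct: the forward inclusion--exclusion expansion of $\prod_{j\in S_0}(1-s_j(t+\tau_j))$ and the converse marginalization over the free coordinates are both valid, the empty-tuple edge case contributing $L$ is handled properly, and together they give exactly the stated equivalence. Note that the paper itself does not prove this statement but imports it from \cite[Thm.~1]{SCSW}; your M\"obius-inversion argument over the Boolean lattice of user subsets is the standard proof of that result, so there is nothing to flag beyond observing that you have in effect reconstructed the cited reference's argument rather than deviated from it.
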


\begin{theorem}
If a sequence set is SI, then it is TI for any MPR capability $\gamma$.
\label{Con2}
\end{theorem}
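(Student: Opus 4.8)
The plan is to reduce the claim directly to the characterization of SI sequences provided by Theorem~\ref{Con1}. Recall that the throughput of user $i$ is, by definition~\eqref{R},
\[
R_i(\tau_1,\ldots,\tau_K)=\frac{1}{L} \sum_{b_i=1,\, q_i\leq \gamma-1} N(b_1,\ldots,b_K|\seq{s}_1^{(\tau_1)},\ldots,\seq{s}_K^{(\tau_K)}),
\]
which is a finite sum, over a fixed collection of binary tuples $(b_1,\ldots,b_K)$ (those with $b_i=1$ and $q_i=\sum_{j\neq i}b_j\leq \gamma-1$), of the counting functions $N(b_1,\ldots,b_K|\seq{s}_1^{(\tau_1)},\ldots,\seq{s}_K^{(\tau_K)})$.

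First I would invoke Theorem~\ref{Con1}: since the sequence set is assumed SI, for \emph{each} fixed choice of $(b_1,\ldots,b_K)$ the function $N(b_1,\ldots,b_K|\seq{s}_1^{(\tau_1)},\ldots,\seq{s}_K^{(\tau_K)})$ is a constant function of $\tau_1,\ldots,\tau_K$. Next I would observe that $R_i$ is a fixed finite linear combination (with coefficient $1/L$) of exactly these $N$ functions, where the index set of the sum depends only on $i$, $\gamma$ and $K$, and not on the shifts. Hence a sum of finitely many constant functions is itself constant in $\tau_1,\ldots,\tau_K$, so $R_i$ is independent of the relative shifts for every $i$. By the definition of TI, this establishes that the sequence set is TI with MPR capability $\gamma$.

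The argument is essentially bookkeeping, so there is no serious obstacle; the only point requiring minor care is that the range of summation in~\eqref{R}, namely $\{(b_1,\ldots,b_K): b_i=1,\ q_i\leq \gamma-1\}$, is itself independent of the shifts $\tau_1,\ldots,\tau_K$. Once this is noted, constancy of each summand transfers immediately to constancy of the sum. This also shows why the result holds uniformly for all $\gamma$ with $1\leq \gamma<K$: changing $\gamma$ only changes which of the (individually shift-invariant) counting functions are summed, never their shift-invariance.
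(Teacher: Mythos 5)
Your proposal is correct and matches the paper's own proof: both invoke Theorem~\ref{Con1} to conclude that each counting function $N(b_1,\ldots,b_K|\seq{s}_1^{(\tau_1)},\ldots,\seq{s}_K^{(\tau_K)})$ appearing in~\eqref{R} is constant in the shifts, and then observe that $R_i$ is a finite sum of such constants over an index set independent of $\tau_1,\ldots,\tau_K$. Your explicit remark that the summation range depends only on $i$, $\gamma$ and $K$ is a minor (and welcome) sharpening of a point the paper leaves implicit, but the argument is the same.
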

\begin{proof}
From~\eqref{R}, we obtain the result that the throughput $R_i(\tau_1,\tau_2,\ldots,\tau_K)$ can be computed only in terms of $N(b_1,\ldots,b_K|\seq{s}_1^{(\tau_1)},\ldots,\seq{s}_K^{(\tau_K)})$ for some particular choices of $b_1,\ldots,b_K$.
By Theorem~\ref{Con1}, we find each term of the above is a constant function of $\tau_1,\ldots,\tau_K$ if the sequence set is SI.
Thus $R_i(\tau_1,\tau_2,\ldots,\tau_K)$ is also a constant function of $\tau_1,\ldots,\tau_K$, which implies that a SI sequence set must be TI for any $\gamma$. \end{proof}

Theorem~\ref{Con2} implies that we can use known constructions of SI sequences to design TI sequences for any MPR capability $\gamma$.
A general construction of SI sequences was given in~\cite{SCSW}, and we present it here for the sake of completeness.
The proof that the sequences so generated are SI can be found in~\cite{SCSW}.


\smallskip
\noindent {\bf An optimal construction~\cite{SCSW}.}
Given the duty factors $n_1/d_1,n_2/d_2,\ldots,n_K/d_K$ where $\gcd(n_i,d_i) = 1$ for all $i$, we construct $\mathbb{G}_i$, a $\prod_{j=1}^{i-1} d_j \times d_i$ array of zeros and ones such that there are exactly $n_i$ ones in each row. ($\prod_{j=1}^{0} d_j$ is defined as 1, as the empty product is equal to 1 by convention.)
Then construct $\seq{s}_{i}$ by reading out the columns of this array from left to right and extending them periodically to the period $d_1 d_2 \cdots d_K$, for $i=1,2,\ldots,K$.

It is shown that this construction produces the common period $d_1 d_2 \cdots d_K$ for all $K$ sequences, and thus it is {\em an optimal construction of TI sequences} for any $\gamma$ when $\gcd(n_i,d_i) = 1$ for all $i$, in the sense that the period achieves the lower bound in Corollary~\ref{cor:period}.


{\bf Example:} Given the duty factors $f_1=2/3$ and $f_2=f_3=1/3$, we can obtain the following three zero-one arrays by the above construction.
\[
\mathbb{G}_1={\begin{bmatrix}
   1 & 1 & 0
\end{bmatrix}}, \ \
\mathbb{G}_2={\begin{bmatrix}
   1 & 0 & 0  \\
   1 & 0 & 0  \\
   1 & 0 & 0
\end{bmatrix}},  \ \
\mathbb{G}_3 ={\begin{bmatrix}
   1 & 0 & 0  \\
   1 & 0 & 0  \\
   1 & 0 & 0  \\
   1 & 0 & 0  \\
   1 & 0 & 0  \\
   1 & 0 & 0  \\
   1 & 0 & 0  \\
   1 & 0 & 0  \\
   1 & 0 & 0
\end{bmatrix}}.
\]
Then we read out the columns of $\mathbb{G}_i$ from left to right and extend them periodically to $\seq{s}_i$ of length $27$, for $i=1,2,3$.
\begin{align*}
\seq{s}_1 &= [1\ 1\ 0\ 1\ 1\ 0\ 1\ 1\ 0\ 1\ 1\ 0\ 1\ 1\ 0\ 1\ 1\ 0\  1\ 1\ 0\ 1\ 1\ 0\ 1\ 1\ 0] \\
\seq{s}_2 &= [1\ 1\ 1\ 0\ 0\ 0\ 0\ 0\ 0\ 1\ 1\ 1\ 0\ 0\ 0\ 0\ 0\ 0 \ 1\ 1\ 1\ 0\ 0\ 0\ 0\ 0\ 0] \\
\seq{s}_3 &= [1\ 1\ 1\ 1\ 1\ 1\ 1\ 1\ 1\ 0\ 0\ 0\ 0\ 0\ 0\ 0\ 0\ 0 \ 0\ 0\ 0\ 0\ 0\ 0\ 0\ 0\ 0]
\end{align*}
One can check that, for all $\tau_1$, $\tau_2$ and $\tau_3$, the values of the generalized Hamming cross-correlations are $H(\tau_1, \tau_2;(1,2)) = 6$, $H(\tau_2, \tau_3;(2,3)) = 3$, $H(\tau_1, \tau_3;(1,3)) = 6$ and $H(\tau_1, \tau_2, \tau_3;(1,2,3)) = 2$.
Hence the sequence set is SI.
It can produce invariant individual throughput for any MPR capability and thus is also TI.
We have $R_1=8/27, R_2=R_3=1/27$ for $\gamma=1$ and $R_1=16/27, R_2=R_3=7/27$ for $\gamma=2$ which are both in accordance with Theorem~\ref{T}.
Its period is $27$ which achieves the lower bound presented in Corollary~\ref{cor:period} for $\gamma=1,2$.

We furthermore present a numerical evaluation of the throughput performance of the TI sequences produced by the SI construction.
A symmetric and saturated system with the MPR capability $\gamma$ is considered.
We conduct $10^5$ simulation runs for each $K,\gamma$ to generate $10^5$ combinations of uniformly distributed relative shifts.
Similar to~\cite{SCSW} for $\gamma=1$, in order to examine the throughput variation and average throughput over the sequence period, TI sequences with the duty factor $1/K$ are compared with a random access scheme in which a user sends a packet in each slot with an independent probability $1/K$.
In Fig.~\ref{fig:num2}, we plot the maximum, mean and minimum individual throughput for $\gamma=2,3$ with $\gamma < K\leq 7$.
TI sequences yield constant symmetric individual throughput as derived in~\eqref{eq:T}.
For the random access scheme, the mean throughput is equal to that of TI sequences as expected; the maximum and minimum throughput are getting closer when the averaging time scale increases, due to the strong law of large numbers.
Results have shown that the SI construction can provide zero-variance on throughput for MPR.


\begin{figure}[htbp]
\centering
\includegraphics[height=3.4in,width=3.2in]{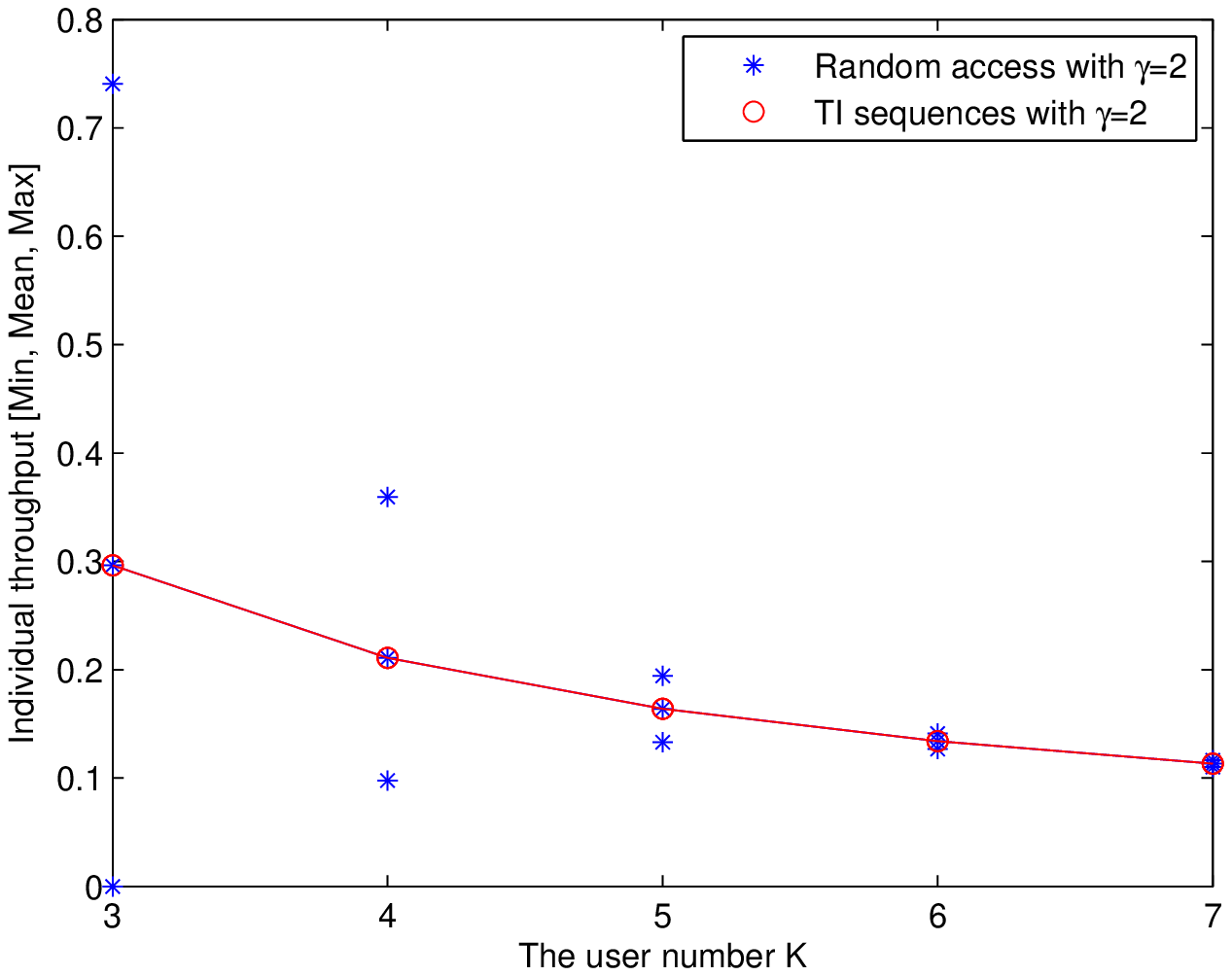}
\includegraphics[height=3.4in,width=3.2in]{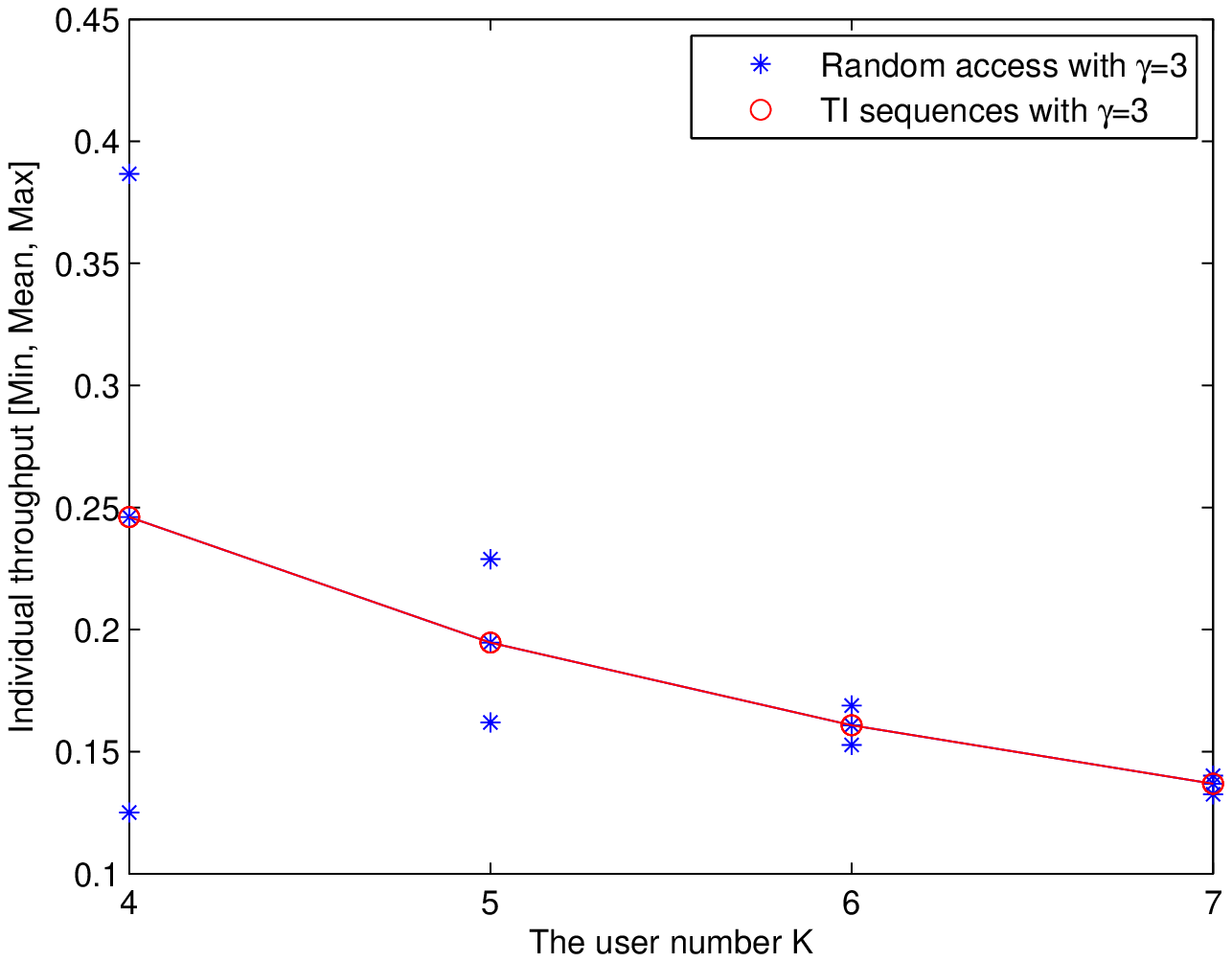}
\caption{The minimum, mean, and maximum individual throughput from simulation for $\gamma=2,3$ with $\gamma < K\leq 7$.
The mean value is connected by a piece-wise linear curve.
The symbols above and below this curve indicate the maximum and minimum value, respectively.}
\label{fig:num2}
\end{figure}

\section{Identification and Decoding}
 In a protocol sequences based multiaccess protocol, the receiver needs to accomplish the following two tasks~\cite{Nguyen92,GyorfiVajda93,Wong07}:
\begin{enumerate}
  \item identify the sender of each successfully received packet (the {\em identification problem});
  \item decode each successfully received packet and recover the original information (the {\em decoding problem}).
\end{enumerate}

Merely relying on some special structure of SI sequences, Shum et al. in~\cite{SCSW} generalized the {\em decimation algorithm}~\cite{Massey85,Rocha00} to solve the identification problem of all uncollided packets without the need of using header information.
However, this solution is invalid for packets survived in collisions if MPR is considered for $\gamma>1$.

The problem of packet separation at the PHY layer with MPR is usually formulated as signal separation in a {\em multiple-input-multiple-output} (MIMO) system.
Let $x_i(n)$ denote the symbol transmitted by user $i$ in symbol duration $n$,
$\mathbf{w}(n)$ denote the additive noise,
and $\mathbf{h}_{i}(n)$ denote the user $i$'s channel vector in symbol duration $n$.
The vector $\mathbf{h}_{i}(n)$ depends on the configuration of channel diversity, for example, in a multiple antenna system its $m$th element represents the channel coefficient from user $i$ to the $m$-th receive antenna.
Suppose that the users $i_1,i_2,\ldots, i_M$ are transmitting simultaneously in symbol duration $n$, and then the received signal at the receiver in symbol duration $n$ is given by
\begin{align*}
\mathbf{y}(n)&=\sum_{j=1}^M \mathbf{h}_{i_j}(n)x_{i_j}(n)+\mathbf{w}(n) \\
             &=\mathbf{H}(n)\mathbf{x}(n)+\mathbf{w}(n),
\end{align*}
where $\mathbf{H}(n)=[\mathbf{h}_{i_1}(n), \mathbf{h}_{i_2}(n),\ldots,\mathbf{h}_{i_M}(n)]$, $\mathbf{x}(n)=[{x}_{i_1}(n), {x}_{i_2}(n),\ldots,{x}_{i_M}(n)]^T$.
The problem of signal separation here is how to estimate transmitted symbols $\mathbf{x}(n)$ from the received vector $\mathbf{y}(n)$, when at most $\gamma$ users are transmitting simultaneously, i.e., $M\leq \gamma$.
It is unrealistic to assume that the receiver knows the channel vector $\mathbf{H}(n)$, and thus various training-based signal separation algorithms have been developed in the literature for the estimation of $\mathbf{H}(n)$, which requires that the receiver has priori knowledge of the senders' identities, the training sequences and their locations in a packet.
However, these algorithms cannot be applied in protocol sequences based multiaccess, because that the transmitting users are changing and unknown to the receiver due to the random relative shifts of the users.

To address these problems, this section proposes a mechanism of the identification and decoding for TI sequences considering $\gamma>1$, as below.
\begin{enumerate}[(a)]
  \item Each packet contains a header that indicates the user identity and packet identity.
   This is also a common practice in protocol sequences systems~\cite{Wong07,CRT}.
   Note that the packet identity only has one bit information which denotes whether this packet is transmitted in an odd period (the period order number module 2 equals 1).
   The bit size of such a header can be found as $1+\log_2 K$, so its effect on the system performance is negligible.
  \item For those collision slots occupied by at most $\gamma$ transmitting users, one can find out the packet content by employing blind signal separation algorithms~\cite{Blind96,Blind98,Tong01}, which can obtain $\mathbf{x}(n)$ from $\mathbf{y}(n)$ without knowing who are the senders and $\mathbf{H}(n)$ in advance.
Interested readers are referred to~\cite{Blind96,Blind98} for more details at the signal processing level.
The complexity of the separation algorithm here depends on $\gamma$, since $\gamma$ is the maximum dimension of $\mathbf{x}(n)$.
Note that such an idea was also adopted in~\cite{ZZL09} for packet separation in IEEE 802.11 with MPR.
\item
As mentioned in~\cite{Massey85,Wong07,CRT}, we apply a Reed-Solomon (RS) code across the data payloads in a period to jointly code and decode the original information even if some packets are unavoidably lost, provided that the required minimum number of survived packets in each period of a user is equal to or smaller than that guaranteed by TI sequences derived in Section III.
The receiver knows which packets are transmitted in the same sequence period through the packet identity.
\end{enumerate}
An illustration of the packet format and RS code is given in Fig.~\ref{fig:decode}.
By using TI sequences, we ensure that in each sequence period there are enough time slots in which the packets can be separated.
Moreover, by using the embedded user identity and packet identity, the receiver can further recover enough fragments of the coded data to decode the original information.
\begin{figure}
\begin{center}
  \includegraphics[width=4.8in]{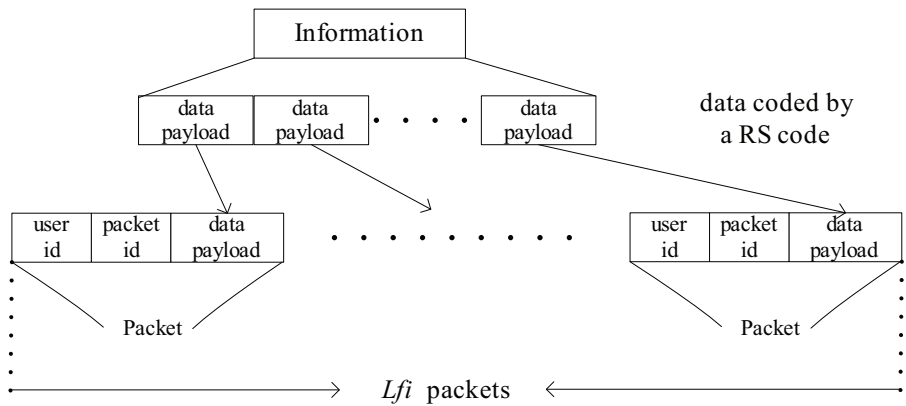}
\end{center}
\caption{Packet format using RS code of user $i$ with the duty factor $f_i$ for $i=1,2,\ldots,K$.}
\label{fig:decode}
\end{figure}

One sees that all approaches mentioned above have been commonly employed in protocol sequences for $\gamma=1$ and contention based network for $\gamma>1$.
Therefore, our mechanism does not cause any additional burden on system performance and receiver design, compared with other MPR protocols.





\section{Structural Theorem}
It was proved in~\cite{SCSW} that the period of $K$ SI sequences with the duty factors $f_i=n_i/d_i$ for $i=1,2,\ldots,K$ must be divisible by
\begin{equation}
\frac{\prod_{i \in \mathcal{U}}d_i}{\gcd(\prod_{i \in \mathcal{U}}d_i,\prod_{i \in \mathcal{U}}n_i)},
\label{eq:exp}
\end{equation}
for any subset $\mathcal{U}$ of $\mathcal{K}$.

Given that SI sequences possess the TI property for any MPR capability and suffer from the drawback that the common period grows exponentially with $K$ as shown in~\eqref{eq:exp}, a natural question is whether there are shorter TI sequences, as a smaller period is favorable in practical applications.
Unfortunately, results we have attained in Section IV rule this out for any $\gamma$ under the assumption that $\gcd(n_i,d_i) = 1$ in $f_i=n_i/d_i$ for all $i$.
However, there is no definite proof yet that the lower bound on the minimum period of TI sequences grows exponentially.
So technically it is of interest to know more about the structure of TI sequences and resolve whether there exist TI sequences which are not SI.
Results in Section VII can shed light in this direction.

The main objective of this section is to demonstrate that the SI property is intrinsic for TI sequences by proving TI implies SI for many specific cases.
This tends to imply that there are no practical solutions to the TI problem with MPR.
We may need to use wobbling sequences~\cite{Wong07} or CRT sequences~\cite{CRT} instead.
Although these sequences are not SI, their pairwise Hamming cross-correlation functions are close to being constant and they can guarantee a quasi-TI throughput performance on a relatively short time scale.

Define a {\em $k$-subset} as a set containing exactly $k$ elements.
We begin our study on the structural theorem with the following proposition.

\begin{prop}
Let $\gamma$ be a positive integer smaller than $K$ and $\mathcal{S}=\{\seq{s}_1,\ldots,\seq{s}_K\}$ be a set of $K$ binary sequences.
If $\mathcal{S}$ is TI with the MPR capability $\gamma$ and any $(K-1)$-subset of $\mathcal{S}$ is SI, then $\mathcal{S}$ is SI.
\label{prop:K-1toK}
\end{prop}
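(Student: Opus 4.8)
The plan is to reduce the claim to a single generalized Hamming cross-correlation and then to isolate that correlation inside the throughput. First I would observe that the hypothesis that \emph{every} $(K-1)$-subset of $\mathcal{S}$ is SI already forces $H(\tau_{i_1},\ldots,\tau_{i_n};\mathsf{A})$ to be a constant function for every tuple $\mathsf{A}\in\mathcal{O}_K$ with $|\mathsf{A}|\le K-1$, since any such tuple is contained in at least one $(K-1)$-subset. Hence, by the very definition of an SI sequence set, the only thing left to establish is that the full-length correlation $H(\tau_1,\ldots,\tau_K;(1,\ldots,K))$ is a constant function of $\tau_1,\ldots,\tau_K$.

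Next I would express the joint counts in terms of the correlations. Writing $\mathsf{ON}=\{j:b_j=1\}$ and $\mathsf{OFF}=\{j:b_j=0\}$ and expanding $\prod_{j\in\mathsf{OFF}}\bigl(1-s_j(t+\tau_j)\bigr)$, inclusion--exclusion yields $N(b_1,\ldots,b_K\mid\seq{s}_1^{(\tau_1)},\ldots,\seq{s}_K^{(\tau_K)})=\sum_{\mathsf{C}\subseteq\mathsf{OFF}}(-1)^{|\mathsf{C}|}\,H(\cdot;\mathsf{ON}\cup\mathsf{C})$, where the index sets are read as sorted tuples (the order is irrelevant inside the product). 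The unique term in which the full tuple $(1,\ldots,K)$ appears is $\mathsf{C}=\mathsf{OFF}$, carrying the coefficient $(-1)^{|\mathsf{OFF}|}$. Substituting this into the throughput formula~\eqref{R} and invoking the reduction above (so that every $H(\cdot;\mathsf{A})$ with $|\mathsf{A}|<K$ is constant), I obtain an identity of the shape $LR_i = C\cdot H(\tau_1,\ldots,\tau_K;(1,\ldots,K)) + (\text{a constant})$, where $C$ aggregates the coefficients of the full correlation over all admissible $(b_1,\ldots,b_K)$.

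The key step, and the main obstacle to keep honest, is to show $C\neq0$. Grouping the throughput terms by the number $q_i=m$ of other active users ($0\le m\le\gamma-1$), there are $\binom{K-1}{m}$ such tuples, each with $|\mathsf{OFF}|=K-1-m$, so $C=\sum_{m=0}^{\gamma-1}\binom{K-1}{m}(-1)^{K-1-m}$. Factoring out $(-1)^{K-1}$ and applying the partial alternating-sum identity $\sum_{m=0}^{r}(-1)^m\binom{n}{m}=(-1)^r\binom{n-1}{r}$ with $n=K-1$ and $r=\gamma-1$, this collapses to $C=(-1)^{K+\gamma}\binom{K-2}{\gamma-1}$. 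Since $1\le\gamma<K$ guarantees $0\le\gamma-1\le K-2$, the binomial coefficient is strictly positive and hence $C\neq0$; this is exactly where the hypothesis $\gamma<K$ (together with $\gamma\ge1$) is indispensable, as $\gamma=K$ would make the coefficient vanish.

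Finally, because $\mathcal{S}$ is TI, the left-hand side $R_i$ is a constant function of all the shifts, while the additive term is constant; therefore $C\cdot H(\tau_1,\ldots,\tau_K;(1,\ldots,K))$ is constant, and since $C\neq0$ the full correlation is itself constant. Combined with the opening reduction, every $H(\cdot;\mathsf{A})$ with $\mathsf{A}\in\mathcal{O}_K$ is then constant, so $\mathcal{S}$ is SI, as claimed.
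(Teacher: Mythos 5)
Your proposal is correct, and although it shares the paper's overall strategy --- express $LR_i$ as a linear combination of generalized Hamming cross-correlations, use the hypothesis that every $(K-1)$-subset is SI to make every term $H(\cdot;\mathsf{A})$ with $|\mathsf{A}|\leq K-1$ constant (any such tuple sits inside some $(K-1)$-subset, as you note), and then invoke throughput invariance to force constancy of the lone surviving term $H(\tau_1,\ldots,\tau_K;(1,\ldots,K))$ --- your execution differs in the one place where it matters, and your version is the correct one. The paper obtains its expansion by inclusion--exclusion on the event that the column-sum of the shift matrix exceeds $\gamma$, and asserts
\[
L R_i \;=\; Lf_i \;-\; \sum_{i\in\mathsf{A},\,|\mathsf{A}|=\gamma+1}H(\tau_j:j\in\mathsf{A};\mathsf{A}) \;+\; \sum_{i\in\mathsf{A},\,|\mathsf{A}|=\gamma+2}H(\tau_j:j\in\mathsf{A};\mathsf{A}) \;-\;\cdots\;+\;(-1)^{K-\gamma}\sum_{i\in\mathsf{A},\,|\mathsf{A}|=K}H(\tau_j:j\in\mathsf{A};\mathsf{A}),
\]
with all coefficients $\pm1$. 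This identity is actually false whenever $2\leq\gamma\leq K-2$: for $K=4$, $\gamma=2$, a slot whose column is all-ones (e.g.\ all four sequences equal to $[1\ 0]$ with zero shifts) contributes $1-\binom{3}{2}+\binom{3}{3}=-1$ to the right-hand side but $0$ to $LR_i$; counting slots in which \emph{at least} $\gamma$ other users transmit requires Jordan-type binomial coefficients $\binom{m-1}{\gamma-1}$ on the size-$m$ terms, not unit coefficients. Your route --- expanding each $N(b_1,\ldots,b_K|\cdot)$ exactly as $\sum_{\mathsf{C}\subseteq\mathsf{OFF}}(-1)^{|\mathsf{C}|}H(\cdot;\mathsf{ON}\cup\mathsf{C})$ and summing over the admissible $(b_1,\ldots,b_K)$ --- produces the true coefficient $C=(-1)^{K+\gamma}\binom{K-2}{\gamma-1}$ on the full correlation (which agrees with the top coefficient $(-1)^{K-\gamma}\binom{K-2}{\gamma-1}$ of the corrected Jordan expansion), and your explicit verification that $C\neq0$ for $1\leq\gamma<K$, together with the observation that $C$ vanishes exactly at $\gamma=K$, supplies precisely the justification the published argument needs but does not correctly provide: the proposition only requires that the top coefficient be nonzero, a fact the paper's $\pm1$ bookkeeping happens to suggest without establishing. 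In short, same skeleton as the paper, but your careful coefficient computation is a genuine repair of a slip in the published proof rather than a mere reproduction of it.
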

\begin{proof}
For a given set of relative shifts $\tau_1,\ldots,\tau_K$, define a $K\times L$ matrix $\mathbb{M}$ by putting $\seq{s}_i^{(\tau_i)}$ at the $i$-th row, for $i=1,\ldots,K$.
By regarding an ordered tuple $\mathsf{A}\in\mathcal{O}_K$ as a subset of $\mathcal{K}$, the generalized Hamming cross-correlation $H(\tau_j:\,j\in\mathsf{A};\mathsf{A})$ can be viewed as the number of all-one columns in the submatrix obtained by collecting all $j$-th row of $\mathbb{M}$, $j\in\mathsf{A}$.
Notice that $L R_i(\tau_1,\ldots,\tau_K)$ counts those `1's on the $i$-th row such that the corresponding column-sum in $\mathbb{M}$ does not exceed $\gamma$.
By the principle of inclusion-and-exclusion, we have
\begin{align*}
L R_i(\tau_1,\ldots,\tau_K) &=
Lf_i - \sum_{i\in\mathsf{A},|\mathsf{A}|=\gamma+1} H(\tau_j:j\in \mathsf{A}; \mathsf{A}) \\
&+ \sum_{i\in\mathsf{A},|\mathsf{A}|=\gamma+2} H(\tau_j:j\in \mathsf{A}; \mathsf{A}) + \cdots + (-1)^{K-\gamma} \sum_{i\in\mathsf{A},|\mathsf{A}|=K } H(\tau_j:j\in \mathsf{A}; \mathsf{A}).
\end{align*}
Since $H(\tau_j:j\in \mathsf{A}; \mathsf{A})$ is SI for $|\mathsf{A}|<K$ and $R_i(\tau_1,\ldots,\tau_K)$ has zero-variance by the condition, $H(\tau_j:j\in \mathcal{K}; \mathcal{K})$ is also SI.
Therefore, $H(\tau_j:j\in \mathsf{A}; \mathsf{A})$ is SI for every $\mathsf{A}$ in $\mathcal{K}$, which implies the entire sequence set $\mathcal{S}$ is SI.
\end{proof}

\begin{definition} \label{def:theta_change}
Consider $M$ binary sequences $\seq{s}_1,\ldots,\seq{s}_M$ with a common period.
Let $\mathsf{M}=(1,\ldots,M)$.
Given two distinct systems of relative shifts $\tau=(\tau_1,\ldots,\tau_M)$ and $\tau'=(\tau'_1,\ldots,\tau'_M)$, let
\[
\delta_i(\tau\to\tau';\mathsf{M}) := \theta_i(\tau_1', \ldots, \tau_M';\mathsf{M}) - \theta_i(\tau_1, \ldots, \tau_M;\mathsf{M}),
\]
denote the change of the value $\theta_i$, for $i=0,1,\ldots,M$.
\end{definition}

\begin{lemma}
Consider $M$ binary sequences $\seq{s}_1,\ldots,\seq{s}_M$ with a common period and two distinct systems of relative shifts $\tau=(\tau_1,\ldots,\tau_M)$ and $\tau'=(\tau'_1,\ldots,\tau'_M)$.
If any $(M-1)$-subset of $\{\seq{s}_1,\ldots,\seq{s}_M\}$ is SI, then for $i=1,\ldots,M-1$,
\begin{equation}
\delta_i(\tau\to\tau';\mathsf{M}) = (-1)^{M-i}{M\choose{i}}\delta_M(\tau\to\tau';\mathsf{M}).
\end{equation}
\label{lem:delta}
\end{lemma}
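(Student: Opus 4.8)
The plan is to convert the shift-invariance hypothesis into a system of linear relations among the $\delta_i$'s and then solve that system by binomial inversion. The bridge between the weight counts $\theta_j$ and the generalized Hamming cross-correlations $H(\cdot;\mathsf{A})$ is a double-counting identity. For $1\le k\le M$, let $E_k:=\sum_{\mathsf{A}\subseteq\{1,\ldots,M\},\,|\mathsf{A}|=k} H(\tau_i:i\in\mathsf{A};\mathsf{A})$ be the sum of all size-$k$ cross-correlations at a fixed system of shifts. Counting the pairs $(t,\mathsf{A})$ for which every row indexed by the $k$-subset $\mathsf{A}$ carries a one in column $t$, first by $\mathsf{A}$ and then by the weight $j$ of column $t$ (a weight-$j$ column supplies $\binom{j}{k}$ such subsets), gives the identity
\[
E_k=\sum_{j=k}^{M}\binom{j}{k}\,\theta_j(\tau_1,\ldots,\tau_M;\mathsf{M}),\qquad 1\le k\le M.
\]

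First I would invoke the hypothesis. Since every $(M-1)$-subset of $\{\seq{s}_1,\ldots,\seq{s}_M\}$ is SI, the definition of SI forces $H(\cdot;\mathsf{A})$ to be shift-invariant for \emph{every} tuple $\mathsf{A}$ with $|\mathsf{A}|\le M-1$ (each such tuple sits inside some $(M-1)$-subset). Hence each $E_k$ with $1\le k\le M-1$ is a constant, so writing $\delta(E_k):=E_k(\tau')-E_k(\tau)$ we have $\delta(E_k)=0$ for $1\le k\le M-1$. On the other hand there is a single $M$-subset, $\mathsf{M}$ itself, so $E_M=H(\cdots;\mathsf{M})=\theta_M$ and therefore $\delta(E_M)=\delta_M(\tau\to\tau';\mathsf{M})=:\delta_M$ is the only surviving difference.

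Then I would invert the triangular relation above. The matrix $\big(\binom{j}{k}\big)_{k\le j}$ is unipotent upper-triangular, hence invertible over $\mathbb{Z}$, and its inverse is the standard binomial inversion $\theta_i=\sum_{k=i}^{M}(-1)^{k-i}\binom{k}{i}E_k$, verified by the identity $\sum_{k}(-1)^{k-i}\binom{k}{i}\binom{j}{k}=[\,i=j\,]$ (apply the subset-of-subset rule $\binom{j}{k}\binom{k}{i}=\binom{j}{i}\binom{j-i}{k-i}$ and collapse the alternating sum). Since this holds at both $\tau$ and $\tau'$, subtracting gives $\delta_i=\sum_{k=i}^{M}(-1)^{k-i}\binom{k}{i}\,\delta(E_k)$. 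For $1\le i\le M-1$ every term with $i\le k\le M-1$ vanishes, leaving only $k=M$:
\[
\delta_i(\tau\to\tau';\mathsf{M})=(-1)^{M-i}\binom{M}{i}\,\delta_M(\tau\to\tau';\mathsf{M}),
\]
which is the claim.

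The computation is entirely routine once the dictionary between $\theta_j$ and $E_k$ is in place; the only step needing care is the inversion itself, and I expect the cleanest route is to quote binomial inversion rather than solving the triangular system by hand, so the main (and mild) obstacle is justifying that formula via the two-line binomial identity above. The one bookkeeping point worth flagging is that the hypothesis must be read as making \emph{all} cross-correlations of size at most $M-1$ shift-invariant, which is precisely what ``every $(M-1)$-subset is SI'' delivers.
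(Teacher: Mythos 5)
Your proposal is correct and takes essentially the same approach as the paper: the identical double-counting identity $E_k=\sum_{j=k}^{M}\binom{j}{k}\theta_j$ (the paper's $|\Phi_j|$ counted two ways) combined with the same use of the hypothesis to make every $\delta(E_k)$ with $k\leq M-1$ vanish. The only difference is cosmetic --- the paper solves the resulting unipotent triangular system $\sum_{i=j}^{M}\binom{i}{j}\delta_i=0$ by back-substitution from $j=M-1$ down to $j=1$, whereas you apply binomial inversion up front; it is the same linear algebra, packaged slightly more cleanly.
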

\begin{proof}
In this proof we simply write $\delta_i$ instead of $\delta_i(\tau\to\tau';\mathsf{M})$.
For $j=1,\ldots,M-1$, let $\Phi_j$ be a collection of all ordered pairs $(\mathsf{A};t)$ with $|\mathsf{A}|=j$, $\mathsf{A}\in\mathcal{O}_M$, such that ${s}_{i}(t+\tau_{i})=1$ for each $i\in\mathsf{A}$.
We shall count the cardinality of $\Phi_j$ in two ways.
By the definition of $\theta_i$, we have
\[
|\Phi_j| = \sum_{i=j}^M {i\choose j}\theta_i(\tau_1, \ldots, \tau_M;\mathsf{M}).
\]
On the other hand, $\Phi_j$ computes all generalized Hamming cross-correlations among every $j$ sequences.
Then, we have
\[
|\Phi_j| = \sum_{|\mathsf{A}|=j,\mathsf{A}\in\mathcal{O}_M} H(\tau_i:\,i\in\mathsf{A};\mathsf{A}),
\]
which is a constant function of $\tau_1,\ldots,\tau_j$ due to the assumption that each $(M-1)$-subset is SI.
Therefore, we have
\[
\sum_{i=j}^M {i\choose j}\theta_i(\tau_1,\ldots,\tau_M;\mathsf{M}) = \sum_{i=j}^M {i\choose j}\theta_i(\tau_1',\ldots,\tau_M';\mathsf{M})
\]
and thus
\begin{equation}
\sum_{i=j}^M {i\choose j}\delta_i=0.
\label{eq:delta}
\end{equation}
Finally, the result follows by plugging $j=M-1,M-2,\ldots,1$ into (\ref{eq:delta}) step by step.
More precisely, from \eqref{eq:delta} we inductively have
\begin{align*}
\delta_j &= -{j+1\choose j}\delta_{j+1} -{j+2\choose j}\delta_{j+2} - \cdots -{M\choose j}\delta_{M} \\
&= (-1)^{M-j}\left[ {j+1\choose 1}{M\choose j+1} - {j+2\choose 2}{M\choose j+2} + \cdots + (-1)^{M-j-1}{M\choose M-j}{M\choose M}\right]\delta_M \\
&= (-1)^{M-j}{M\choose j}\left[ {M-j\choose 1} - {M-j\choose 2} + \cdots + (-1)^{M-j-1}{M-j\choose M-j} \right]\delta_M \\
&= (-1)^{M-j} {M\choose j}\delta_M,
\end{align*}
as desired.
\end{proof}

\begin{lemma} \label{lem:theta_less_than}
Let $\mathcal{S}=\{\seq{s}_1,\ldots,\seq{s}_K\}$ be a set of $K$ binary sequences with a common period $L$.
Let $\mathsf{A}=(1,\ldots,M)$, $\mathsf{B}=(M+1,\ldots,K)$, for some $M<K$, and $\tau=(\tau_1,\ldots,\tau_M)$, $\tau'=(\tau'_1,\ldots,\tau'_M)$, $\tau^*=(\tau^*_1,\ldots,\tau^*_M)$ be three systems of relative shifts.
If $\mathcal{S}$ is TI with the MPR capability $\gamma$ and any $(M-1)$-subset of $\mathcal{S}$ is SI, then
\begin{equation} \label{eq:delta_theta}
\delta_M(\tau\to\tau';\mathsf{A}) \sum_{i=1}^{M-1}(-1)^{M-i}{M-2\choose {i-1}} \theta_{\gamma-i}(\tau_{M+1}^*, \ldots, \tau_K^*;\mathsf{B})=0.
\end{equation}
Note that $\theta_{\gamma-i}(\tau_{M+1}^*,\ldots,\tau_K^*;\mathsf{B})=0$ if $\gamma < i$ or $\gamma-i>K-M$.
\end{lemma}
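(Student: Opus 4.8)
The plan is to generalize the aggregation argument used in the proof of Theorem~\ref{pSI} from the pair $\mathsf{A}=(1,2)$ to an arbitrary $M$-tuple $\mathsf{A}=(1,\ldots,M)$, invoking the $(M-1)$-SI hypothesis (through Lemma~\ref{lem:delta}) to reduce all lower-order marginal changes to $\delta_M(\tau\to\tau';\mathsf{A})$, and closing with a binomial identity.

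First I would fix the shifts of the $\mathsf{B}$-sequences at $\tau^*$ and examine the aggregate throughput $\sum_{i=1}^M R_i$ of the $\mathsf{A}$-users. Since each $R_i$ is TI, this sum is a constant as the $\mathsf{A}$-shifts vary. Classifying each slot by the number $a$ of active $\mathsf{A}$-users, and noting that an active $\mathsf{A}$-user is received error-free if and only if at most $\gamma-a$ of the $\mathsf{B}$-users are active at that slot, one obtains $L\sum_{i=1}^M R_i=\sum_{a=1}^M a\,T_a$, where $T_a$ counts the slots carrying exactly $a$ active $\mathsf{A}$-users and at most $\gamma-a$ active $\mathsf{B}$-users. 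This is the direct analogue of the quantities $T_1,T_2$ in the proof of Theorem~\ref{pSI}.

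Next I would take the expectation over uniformly random $\mathsf{A}$-shifts and apply the decoupling identity \eqref{eq:Shum09} to each term $N(\cdots)$ occurring in $T_a$, exactly as in the passage \eqref{eq:average_1}--\eqref{eq:average_2}. This expresses the throughput in the product form $\tfrac{1}{L}\sum_{a=1}^M a\,\theta_a(\cdot\,;\mathsf{A})\,\theta_{\leq\gamma-a}(\tau^*;\mathsf{B})$, and the zero variance of $\sum_{i=1}^M R_i$ forces its value at $\tau$ to agree with its value at $\tau'$, giving
\[
\sum_{a=1}^M a\,\delta_a(\tau\to\tau';\mathsf{A})\,\theta_{\leq\gamma-a}(\tau^*;\mathsf{B})=0,
\]
with $\delta_a$ as in Definition~\ref{def:theta_change}. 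Establishing this relation rigorously---the appendix-style accounting that converts the exact throughput into the product form---is the step I expect to be the main obstacle, because the joint count $T_a$ does not factor slot by slot, so the reduction must be carried out through the shift-averaging together with the constancy of the throughput rather than by a pointwise identity.

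Finally the combinatorics is routine. By hypothesis every $(M-1)$-subset of $\{\seq{s}_1,\ldots,\seq{s}_M\}$ is SI, so Lemma~\ref{lem:delta} gives $\delta_a=(-1)^{M-a}\binom{M}{a}\delta_M$ for $a=1,\ldots,M$ (the case $a=M$ being the trivial identity). Substituting and factoring out $\delta_M:=\delta_M(\tau\to\tau';\mathsf{A})$ leaves $\delta_M\sum_{a=1}^M a(-1)^{M-a}\binom{M}{a}\theta_{\leq\gamma-a}(\tau^*;\mathsf{B})=0$. Writing $\theta_{\leq\gamma-a}=\sum_{c\leq\gamma-a}\theta_c$, using $a\binom{M}{a}=M\binom{M-1}{a-1}$, and the partial alternating-sum identity $\sum_{k=0}^{r}(-1)^k\binom{n}{k}=(-1)^r\binom{n-1}{r}$, the coefficient of $\theta_{\gamma-i}(\tau^*;\mathsf{B})$ collapses to $M(-1)^{M-i}\binom{M-2}{i-1}$ for $1\leq i\leq M-1$ and to $0$ for $i\geq M$. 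Hence the sum equals $M\,\delta_M\,W$ with $W=\sum_{i=1}^{M-1}(-1)^{M-i}\binom{M-2}{i-1}\theta_{\gamma-i}(\tau^*;\mathsf{B})$, and dividing by $M$ yields \eqref{eq:delta_theta}.
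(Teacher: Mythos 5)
Your proposal is correct and follows essentially the same route as the paper's proof: you aggregate $R_1+\cdots+R_M$ with the $\mathsf{B}$-shifts fixed at $\tau^*$, obtain the product form of \eqref{eq:average_M} by shift-averaging with the decoupling identity \eqref{eq:Shum09} (the step you flag as the main obstacle is exactly what the paper's auxiliary-sequence construction in Appendix~\ref{sec:app_1} handles), reduce every $\delta_a$ to $\delta_M$ via Lemma~\ref{lem:delta}, and collapse the alternating sum using $a\binom{M}{a}=M\binom{M-1}{a-1}$ together with the partial alternating-sum identity. Your final regrouping by the exact counts $\theta_{\gamma-i}$ is the same computation as the paper's summation-by-parts derivation of \eqref{eq:delta_theta_1} $\Rightarrow$ \eqref{eq:delta_theta} in Appendix~\ref{sec:app_2}, merely organized in the opposite order.
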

\begin{proof}
Let $R_i$ be the individual throughput associated with $\seq{s}_i$.
Similar to the arguments in (\ref{eq:average_1}) -- (\ref{eq:average_2}), after taking the expectation over the relative shifts $\tau_1,\ldots,\tau_M$, we have
\begin{equation}
\mathbb{E}(LR_1+\cdots+LR_M)  =\sum_{i=1}^M \frac{i\,\theta_i(\tau_1,\ldots,\tau_M;\mathsf{A})\theta_{\leq\gamma-i}(\tau_{M+1}^*,\ldots,\tau_K^*;\mathsf{B})}{L}, \label{eq:average_M}
\end{equation}
which is a constant function of $\tau_1,\ldots,\tau_M$ due to the zero-variance of $R_1+\cdots+R_M$.

Now, consider that the relative shifts of the sequences $\seq{s}_1,\ldots,\seq{s}_M$ are changed from $\tau$ to $\tau'$.
By Lemma~\ref{lem:delta} and (\ref{eq:average_M}), we have
\begin{align*}
0 &= \sum_{i=1}^M \frac{i\,\delta_i(\tau\to\tau';\mathsf{A})\,\theta_{\leq\gamma-i}(\tau_{M+1}^*,\ldots,\tau_K^*;\mathsf{B})}{L} \\
  &= \frac{1}{L}\sum_{i=1}^M (-1)^{M-i}\, i\,{M\choose{i}}\delta_M(\tau\to\tau';\mathsf{A}) \theta_{\leq\gamma-i}(\tau_{M+1}^*,\ldots,\tau_K^*;\mathsf{B}) \\
  &= \frac{\delta_M(\tau\to\tau';\mathsf{A}) M}{L} \sum_{i=1}^M (-1)^{M-i} {M-1\choose{i-1}} \theta_{\leq\gamma-i}(\tau_{M+1}^*,\ldots,\tau_K^*;\mathsf{B}).
\end{align*}
Thus,
\begin{equation} \label{eq:delta_theta_1}
\delta_M(\tau\to\tau';\mathsf{A}) \sum_{i=1}^{M}(-1)^{M-i}{M-1\choose {i-1}} \theta_{\leq\gamma-i}(\tau_{M+1}^*, \ldots, \tau_K^*;\mathsf{B})=0.
\end{equation}
Finally, the target identity \eqref{eq:delta_theta} can be obtained from \eqref{eq:delta_theta_1}.
See Appendix~\ref{sec:app_2} for the derivation steps of \eqref{eq:delta_theta_1} $\Rightarrow$ \eqref{eq:delta_theta}.
\end{proof}

The following result follows from the previous lemma.
\begin{prop}
Following the setting of Lemma~\ref{lem:theta_less_than}, if all sequences have the same duty factor $f$, then
\begin{align*}
\delta_M(\tau\to\tau';\mathsf{A}) \sum_{i=1}^{M-1}(-1)^{M-i}{M-2\choose {i-1}} {K-M\choose {\gamma-i}} f^{\gamma-i}(1-f)&^{K-M-\gamma+i}=0.
\end{align*}
Note that ${K-M\choose {\gamma-i}}=0$ if $\gamma < i$ or $\gamma-i>K-M$.
\label{cor:constant_f}
\end{prop}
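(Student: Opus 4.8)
The plan is to start from the identity \eqref{eq:delta_theta} furnished by Lemma~\ref{lem:theta_less_than}, which holds for \emph{every} choice of the relative shifts $\tau^*=(\tau^*_{M+1},\ldots,\tau^*_K)$ of the sequences in $\mathsf{B}$. The crucial observation is that the factor $\delta_M(\tau\to\tau';\mathsf{A})$ depends only on $\tau,\tau'$ and not on $\tau^*$. Hence I may regard $\tau^*_{M+1},\ldots,\tau^*_K$ as independent random variables, each uniformly distributed over $\{0,1,\ldots,L-1\}$, and take the expectation of \eqref{eq:delta_theta} over $\tau^*$. Since the left-hand side is identically zero for each realization of $\tau^*$, its expectation is also zero; as $\delta_M(\tau\to\tau';\mathsf{A})$ is constant with respect to this expectation, it factors out and leaves
\[
\delta_M(\tau\to\tau';\mathsf{A}) \sum_{i=1}^{M-1}(-1)^{M-i}\binom{M-2}{i-1} \mathbb{E}\bigl(\theta_{\gamma-i}(\tau^*_{M+1}, \ldots, \tau_K^*;\mathsf{B})\bigr)=0.
\]

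The second step is to evaluate $\mathbb{E}(\theta_{\gamma-i})$ explicitly. By definition $\theta_j(\tau^*_{M+1},\ldots,\tau^*_K;\mathsf{B})$ counts the slots in which exactly $j$ of the $K-M$ sequences of $\mathsf{B}$ carry a one, so I expand it as a sum over the $\binom{K-M}{j}$ size-$j$ subsets $S\subseteq\mathsf{B}$ of those counting functions $N(\cdot)$ that fix a one on the coordinates in $S$ and a zero elsewhere. Applying the Shum et al.\ identity~\eqref{eq:Shum09} to the $K-M$ sequences of $\mathsf{B}$ gives $\mathbb{E}(N)=\frac{1}{L^{K-M-1}}\prod_{k\in\mathsf{B}}N(b_k|\seq{s}_k)$; because every duty factor equals $f$ we have $N(1|\seq{s}_k)=Lf$ and $N(0|\seq{s}_k)=L(1-f)$, so each subset contributes $Lf^{j}(1-f)^{K-M-j}$ and summing over the $\binom{K-M}{j}$ subsets yields
\[
\mathbb{E}\bigl(\theta_j(\tau^*_{M+1},\ldots,\tau^*_K;\mathsf{B})\bigr)=L\binom{K-M}{j}f^{j}(1-f)^{K-M-j}.
\]

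The final step is to substitute $j=\gamma-i$ into this formula, insert it into the expectation identity above, and cancel the common positive factor $L$; this produces exactly the claimed equation. The stated degenerate cases are automatically consistent: when $\gamma<i$ or $\gamma-i>K-M$ the index $\gamma-i$ is either negative or exceeds $K-M$, so both $\theta_{\gamma-i}$ and $\binom{K-M}{\gamma-i}$ vanish. I do not anticipate a genuine obstacle here, since everything reduces to the expectation computation; the only point requiring care is the bookkeeping in the second step, namely verifying that the per-subset contribution is independent of which subset is chosen (a consequence of the common duty factor) and that the powers of $L$ cancel correctly.
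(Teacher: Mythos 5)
Your proposal is correct and follows essentially the same route as the paper: the paper's proof likewise observes that \eqref{eq:delta_theta} holds for every choice of $\tau^*_{M+1},\ldots,\tau^*_K$, takes the expectation over uniformly distributed shifts, and replaces $\theta_{\gamma-i}(\tau^*_{M+1},\ldots,\tau^*_K;\mathsf{B})$ by ${K-M\choose \gamma-i}f^{\gamma-i}(1-f)^{K-M-\gamma+i}$. Your write-up is in fact slightly more careful than the paper's one-line argument, since you make explicit the subset decomposition of $\theta_j$, the use of \eqref{eq:Shum09}, and the cancellation of the factor $L$ (the expectation is $L{K-M\choose j}f^{j}(1-f)^{K-M-j}$, which the paper tacitly normalizes away).
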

\begin{proof}
Observe that there is no constraint on the choice of $\tau_{M+1}^*, \ldots, \tau_K^*$ in \eqref{eq:delta_theta}.
After taking the expectation over all possible relative shifts of $\seq{s}_{M+1},\ldots,\seq{s}_K$, we can replace the term $\theta_{\gamma-i}(\tau_{M+1}^*, \ldots, \tau_K^*;\mathsf{B})$ in \eqref{eq:delta_theta} by
$${K-M\choose {\gamma-i}} f^{\gamma-i}(1-f)^{K-M-\gamma+i}.$$
This completes the proof.
\end{proof}


\smallskip
We are ready for our main results in this section.
\subsection{The case of $\gamma=1$}

\begin{theorem} \label{thm:mustbe SI 1}
Let $\mathcal{S}=\{\seq{s}_1,\ldots,\seq{s}_K\}$ be a set of $K$ binary sequences.
If $\mathcal{S}$ is TI with the MPR capability $1$, then it is SI.
\end{theorem}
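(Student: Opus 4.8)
The plan is to show that every generalized Hamming cross-correlation $H(\tau_j:\,j\in\mathsf{A};\mathsf{A})$ is SI by induction on the tuple size $M=|\mathsf{A}|$, and then to close the argument at the top level. The base cases are immediate: a $1$-subset gives $H(\tau_i;(i))=Lf_i$, a constant, and every $2$-subset is SI because TI implies pairwise SI by Theorem~\ref{pSI}. (In particular, for $K=2$ the claim is nothing more than pairwise SI, so I may assume $K\geq 3$.)

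For the inductive step, fix $M$ with $3\leq M\leq K-1$ and assume that every subset of size smaller than $M$ is SI; in particular every $(M-1)$-subset is SI, so Lemma~\ref{lem:theta_less_than} applies to an arbitrary $M$-subset, which I relabel as $\mathsf{A}=(1,\ldots,M)$ with $\mathsf{B}=(M+1,\ldots,K)$. Specializing \eqref{eq:delta_theta} to $\gamma=1$ annihilates every term with $i\geq 2$, since $\theta_{\gamma-i}(\cdots;\mathsf{B})=\theta_{1-i}(\cdots;\mathsf{B})=0$ whenever $i>1$; only the $i=1$ term survives, leaving
\[
(-1)^{M-1}\,\delta_M(\tau\to\tau';\mathsf{A})\,\theta_0(\tau^*_{M+1},\ldots,\tau^*_K;\mathsf{B})=0.
\]

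The crux is then to choose the auxiliary shifts $\tau^*$ so that $\theta_0(\tau^*_{M+1},\ldots,\tau^*_K;\mathsf{B})>0$, i.e., so that some slot finds all users of $\mathsf{B}$ simultaneously silent. Because $\gamma=1$, Proposition~\ref{prop:allone} forbids any all-one sequence, so each $\seq{s}_j$ with $j\in\mathsf{B}$ carries at least one zero; cyclically shifting each so that such a zero occupies slot $0$ makes that slot an all-zero column for $\mathsf{B}$, whence $\theta_0\geq 1>0$. With this choice the displayed identity forces $\delta_M(\tau\to\tau';\mathsf{A})=0$ for every pair of shift systems, which is precisely the statement that $\theta_M(\cdot;\mathsf{A})=H(\cdot;\mathsf{A})$ is constant. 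Hence every $M$-subset is SI, and the induction proceeds up through $M=K-1$.

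Finally, the Lemma cannot be invoked at $M=K$ since it requires a nonempty $\mathsf{B}$; here I would instead appeal to Proposition~\ref{prop:K-1toK}. Having established that all $(K-1)$-subsets are SI, and given that $\mathcal{S}$ is TI, that proposition delivers SI of the full set $\mathcal{S}$, completing the proof. The main obstacle is really the handling of the two boundary regimes rather than any heavy computation: guaranteeing $\theta_0>0$ throughout the inductive range (which rests squarely on the no-all-one-sequence fact for $\gamma=1$) and patching the top level $M=K$ with Proposition~\ref{prop:K-1toK}, since that case lies outside the hypotheses of Lemma~\ref{lem:theta_less_than}.
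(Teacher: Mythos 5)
Your proposal is correct and follows essentially the same route as the paper's own proof: induction on the subset size $M$ with the base case given by Theorem~\ref{pSI}, the inductive step via Lemma~\ref{lem:theta_less_than} specialized to $\gamma=1$ together with a choice of $\tau^*$ making $\theta_0(\tau^*_{M+1},\ldots,\tau^*_K;\mathsf{B})>0$ (guaranteed by Proposition~\ref{prop:allone}), and the final passage from all $(K-1)$-subsets to the full set via Proposition~\ref{prop:K-1toK}. Your explicit justification of $\theta_0>0$ by shifting a zero of each sequence in $\mathsf{B}$ into slot $0$, and your remark that $M=K$ lies outside the lemma's hypotheses, merely spell out details the paper leaves implicit.
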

\begin{proof}
We claim by induction that any $M$-subset of $\mathcal{S}$ is SI for $M=2,3,\ldots,K-1$.
The case of $M=2$ has been settled in Theorem~\ref{pSI}, so we consider $M\geq 3$ and assume that any set of $M-1$ sequences is SI.

Without loss of generality, let $\mathsf{A}=(1,\ldots,M)$ and $\mathsf{B}=(M+1,\ldots,K)$.
Let the relative shifts of $\seq{s}_1,\ldots,\seq{s}_M$ be changed from $\tau=(\tau_1,\ldots,\tau_M)$ to $\tau'=(\tau_1',\ldots,\tau_M')$.
Now, fix the relative shifts of $\seq{s}_{M+1},\ldots,\seq{s}_K$ be $\tau^*=(\tau^*_{M+1},\ldots,\tau^*_K)$ such that $\theta_0(\tau^*_{M+1},\ldots,\tau^*_K;\mathsf{B})>0$.
Such $\tau^*$ is always existent because there is no all-one sequence in $\mathcal{S}$ for $\gamma=1$ following Proposition~\ref{prop:allone}.
By Lemma~\ref{lem:theta_less_than} ($\gamma=1$), we have
$$\delta_M(\tau\to\tau';\mathsf{A})\theta_0(\tau^*_{M+1},\ldots,\tau^*_K;\mathsf{B})=0,$$
which implies that $\delta_M(\tau\to\tau';\mathsf{A})=0$ due to the choice of $\tau^*$.
Therefore, $\theta_M(\tau_1,\ldots,\tau_M;\mathsf{A})$ is a constant function of $\tau_1,\ldots,\tau_M$, and thus $\{\seq{s}_1,\ldots,\seq{s}_M\}$ is SI.

By induction on $M$ from $M=2$, we conclude that any $(K-1)$-subset of $\mathcal{S}$ is SI.
This furthermore implies that the entire set $\mathcal{S}$ is SI from Proposition~\ref{prop:K-1toK}.
\end{proof}

Theorem~\ref{thm:mustbe SI 1} asserts that SI and TI sequences are equivalent for $\gamma=1$ (i.e., SPR).

\subsection{The case of $\gamma=2$}
\begin{theorem}
Let $\mathcal{S}=\{\seq{s}_1,\ldots,\seq{s}_K\}$ be a set of $K$ binary sequences with the same duty factor $f=\frac{n}{d}\neq 0,1$, which is TI with the MPR capability 2.
$\mathcal{S}$ is SI if $\gcd(K-2,d)=1$.
\label{thm:mustbe SI 2}
\end{theorem}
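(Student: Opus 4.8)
The plan is to mirror the inductive scheme of Theorem~\ref{thm:mustbe SI 1}: I will prove by induction on $M$ that every $M$-subset of $\mathcal{S}$ is SI for $M=2,3,\ldots,K-1$, and then invoke Proposition~\ref{prop:K-1toK} to lift the SI property of all $(K-1)$-subsets to the whole set. The base case $M=2$ is precisely the pairwise SI property guaranteed by Theorem~\ref{pSI}. For the inductive step I assume every $(M-1)$-subset is SI and, writing $\mathsf{A}=(1,\ldots,M)$ with complement $\mathsf{B}=(M+1,\ldots,K)$, I aim to show that $\delta_M(\tau\to\tau';\mathsf{A})=0$ for any two shift systems $\tau,\tau'$. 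This forces $\theta_M(\cdot;\mathsf{A})=H(\cdot;\mathsf{A})$ to be constant, making $\{\seq{s}_1,\ldots,\seq{s}_M\}$ SI.

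The engine for the inductive step is Proposition~\ref{cor:constant_f}, which applies exactly because all duty factors equal $f$ and the inductive hypothesis supplies the SI-ness of the relevant $(M-1)$-subsets. Specializing its identity to $\gamma=2$ collapses the sum to the two terms $i=1$ and $i=2$ (the factor $\binom{K-M}{\gamma-i}$ annihilates $i\geq 3$), yielding
$$\delta_M(\tau\to\tau';\mathsf{A})\Big[(-1)^{M-1}(K-M)f(1-f)^{K-M-1}+(-1)^{M-2}(M-2)(1-f)^{K-M}\Big]=0.$$
Since $f\neq 1$, I can divide by the nonzero quantity $(-1)^{M-2}(1-f)^{K-M-1}$; the bracket then collapses to the scalar $(M-2)-(K-2)f$, leaving
$$\delta_M(\tau\to\tau';\mathsf{A})\big[(M-2)-(K-2)f\big]=0.$$

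Everything now rests on showing that $(M-2)-(K-2)f\neq 0$ for each $M\in\{3,\ldots,K-1\}$; this is the one place the arithmetic hypothesis enters and is the crux of the argument. Writing $f=n/d$ with $\gcd(n,d)=1$, the equation $(M-2)-(K-2)f=0$ is equivalent to $(M-2)d=(K-2)n$, which forces $d\mid(K-2)n$ and hence $d\mid(K-2)$ because $\gcd(n,d)=1$. But $f\neq 0,1$ gives $d\geq 2$, so $d\mid(K-2)$ would make $\gcd(K-2,d)=d\geq 2$, contradicting $\gcd(K-2,d)=1$. Hence $(M-2)-(K-2)f\neq 0$, which forces $\delta_M(\tau\to\tau';\mathsf{A})=0$ and, $\tau,\tau'$ being arbitrary, completes the inductive step.

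Carrying the induction up to $M=K-1$ shows every $(K-1)$-subset of $\mathcal{S}$ is SI, and a final application of Proposition~\ref{prop:K-1toK} gives that $\mathcal{S}$ itself is SI. The main obstacle, and the only nonroutine ingredient, is the number-theoretic verification that the coefficient $(M-2)-(K-2)f$ never vanishes over the induction range; this is exactly what the condition $\gcd(K-2,d)=1$ is engineered to guarantee, and it is also what signals that the result can fail once that coprimality is dropped.
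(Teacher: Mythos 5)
Your proposal is correct and takes essentially the same route as the paper's own proof: induction on $M$ with base case Theorem~\ref{pSI}, the inductive step via Proposition~\ref{cor:constant_f} specialized to $\gamma=2$ (the paper writes the resulting scalar as $f-\frac{M-2}{K-2}$, equivalent to your $(M-2)-(K-2)f$), and the final lift through Proposition~\ref{prop:K-1toK}. Your only deviation is that you spell out the divisibility argument ($d\mid(K-2)n$, hence $d\mid K-2$, contradicting $\gcd(K-2,d)=1$ since $d\geq 2$) that the paper leaves implicit in its case (a).
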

\begin{proof}
We claim by induction that any $M$-subset of $\mathcal{S}$ is SI for $M=2,3,\ldots,K-1$.
The case of $M=2$ holds by Theorem~\ref{pSI}, so we consider $M\geq 3$.

Similar to the proof of Theorem~\ref{thm:mustbe SI 1}, let $\mathsf{A}=(1,\ldots,M)$, $\mathsf{B}=(M+1,\ldots,K)$, $\tau=(\tau_1,\ldots,\tau_M)$ and $\tau'=(\tau_1',\ldots,\tau_M')$.
By Proposition~\ref{cor:constant_f} ($\gamma=2$), we have
\[
\delta_M(\tau\to\tau';\mathsf{A})\left( (K-M)f(1-f)^{K-M-1} - (M-2) (1-f)^{K-M} \right) = 0.
\]
By assuming $K\geq 3$, the above equation can be simplified to
$$\delta_M(\tau\to\tau';\mathsf{A})\left( f - \frac{M-2}{K-2} \right) = 0.$$
There are two cases.
\begin{enumerate}[(a)]
\item $f=\frac{M-2}{K-2}$: If $\gcd(K-2,d)=1$, then $f=\frac{M-2}{K-2}$ contradicts to the assumption that $f=\frac{n}{d}$, and thus we have $\delta_M(\tau\to\tau';\mathsf{A})=0$.
\item $\delta_M(\tau\to\tau';\mathsf{A})=0$: This implies that $\theta_M(\tau_1,\ldots,\tau_M;\mathsf{A})$ is a constant of $\tau_1,\ldots,\tau_M$.
    Hence, $\{\seq{s}_1,\ldots,\seq{s}_M\}$ is SI.
\end{enumerate}

By induction on $M$ from $M=2$, we conclude that any $(K-1)$-subset of $\mathcal{S}$ is SI.
This furthermore implies that the entire set $\mathcal{S}$ is SI from Proposition~\ref{prop:K-1toK}.
\end{proof}

\subsection{The case of $\gamma=3$}

\begin{theorem}
Let $\mathcal{S}=\{\seq{s}_1,\ldots,\seq{s}_K\}$ be a set of $K$ binary sequences with the same duty factor $f=\frac{n}{d}\neq 0,1$, which is TI with the MPR capability 3.
Then $\mathcal{S}$ is SI if (i) $K-3$ is prime, and (ii) $\gcd(\frac{K-2}{2},d)=1$.
\label{thm:mustbe SI 3}
\end{theorem}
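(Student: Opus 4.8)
The plan is to mirror the inductive strategy of Theorems~\ref{thm:mustbe SI 1} and~\ref{thm:mustbe SI 2}: I claim that every $M$-subset of $\mathcal{S}$ is SI for $M=2,3,\ldots,K-1$, and prove this by induction on $M$. The base case $M=2$ is Theorem~\ref{pSI}. For the inductive step I set $\mathsf{A}=(1,\ldots,M)$, $\mathsf{B}=(M+1,\ldots,K)$, assume every $(M-1)$-subset is SI, and let the shifts of $\seq{s}_1,\ldots,\seq{s}_M$ change from $\tau$ to $\tau'$. Applying Proposition~\ref{cor:constant_f} with $\gamma=3$ yields an identity $\delta_M(\tau\to\tau';\mathsf{A})\,S_M(f)=0$, where $S_M(f)$ is the sum in Proposition~\ref{cor:constant_f} and carries at most the three terms $i=1,2,3$. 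It therefore suffices to show $S_M(f)\neq 0$, as this forces $\delta_M(\tau\to\tau';\mathsf{A})=0$, hence that $\theta_M(\cdot;\mathsf{A})$ is constant and $\{\seq{s}_1,\ldots,\seq{s}_M\}$ is SI.

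For the interior range $4\le M\le K-2$ all three terms are present and $K-M\ge 2$, so I can factor out the nonzero quantity $(-1)^{M-3}(1-f)^{K-M-2}$ and reduce $S_M(f)$ to the quadratic form
\[
g_M(f):=\binom{K-M}{2}f^2-(M-2)(K-M)\,f(1-f)+\binom{M-2}{2}(1-f)^2 .
\]
Since $f\neq 1$, the equation $g_M(f)=0$ is a quadratic in the rational number $f/(1-f)$ with leading coefficient $\binom{K-M}{2}\neq 0$, so a rational root can exist only if its discriminant is a perfect square. The crux is the identity
\[
(M-2)^2(K-M)^2-4\binom{K-M}{2}\binom{M-2}{2}=(M-2)(K-M)(K-3),
\]
which I would verify directly. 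Because $K-3$ is prime by hypothesis (i) and $0<M-2<K-3$, $0<K-M<K-3$ throughout this range, the prime $K-3$ divides neither $M-2$ nor $K-M$; hence it divides the discriminant to the first power only, so the discriminant is not a perfect square. Consequently $g_M$ has no rational root, the rational value $f=n/d$ cannot annihilate it, and $S_M(f)\neq 0$.

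The two boundary values $M=3$ and $M=K-1$ need separate treatment, because there one of the three terms drops out and the discriminant degenerates to the perfect square $(K-3)^2$. For $M=3$ the $i=3$ term vanishes and, after removing nonzero factors, $S_3(f)=0$ reduces to the linear condition $f=\tfrac{2}{K-2}$; for $M=K-1$ the $i=1$ term vanishes and $S_{K-1}(f)=0$ reduces to $1-f=\tfrac{2}{K-2}$. Noting that $K$ is even so that $\tfrac{K-2}{2}$ is an integer, in either case the equality forces $\tfrac{K-2}{2}$ to divide the reduced denominator $d$, contradicting hypothesis (ii), $\gcd(\tfrac{K-2}{2},d)=1$, once $K>4$ (guaranteed by (i)). Thus again $S_M(f)\neq 0$ and $\delta_M(\tau\to\tau';\mathsf{A})=0$.

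Having shown $\delta_M(\tau\to\tau';\mathsf{A})=0$ for every $M$ and every pair $\tau,\tau'$, I conclude by induction that every $(K-1)$-subset of $\mathcal{S}$ is SI; Proposition~\ref{prop:K-1toK} then upgrades this to $\mathcal{S}$ itself being SI. The main obstacle is the interior case: one must identify the correct quadratic $g_M$ after the factorization, carry out the discriminant identity above, and then recognize that primality of $K-3$ is exactly what guarantees a non-square discriminant, while the separate gcd hypothesis (ii) becomes unavoidable precisely at the two boundary values where that discriminant collapses to a square.
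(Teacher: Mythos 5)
Your proposal is correct and takes essentially the same route as the paper's proof: both invoke Proposition~\ref{cor:constant_f} with $\gamma=3$, reduce the resulting bracket to a quadratic with integer coefficients whose discriminant equals $(M-2)(K-M)(K-3)$, use primality of $K-3$ to conclude this is not a perfect square (so the rational duty factor cannot be a root), and fall back on hypothesis (ii) precisely where the discriminant degenerates, excluding $f=\frac{2}{K-2}$ and $f=\frac{K-4}{K-2}$. The only difference is bookkeeping: you split off $M=3$ and $M=K-1$ as boundary cases of a quadratic in $f/(1-f)$, whereas the paper works with the equivalent quadratic $\binom{K-2}{2}f^2-(K-3)(M-2)f+\binom{M-2}{2}=0$ in $f$ (same discriminant, up to the unimodular substitution, and modulo a harmless typo in the paper's displayed discriminant, where $p(p-1)$ should read $p(p+1)$) and locates the exceptional case $M=K-1$ through the congruence $(M-2)(M-3)\equiv 0 \pmod{K-3}$.
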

\begin{proof}
The structure of this proof is similar to Theorems~\ref{thm:mustbe SI 1} and \ref{thm:mustbe SI 2}, so we only show the inductive step here.

Similar to the proof of Theorem~\ref{thm:mustbe SI 1}, let $\mathsf{A}=(1,\ldots,M)$, $\mathsf{B}=(M+1,\ldots,K)$, $\tau=(\tau_1,\ldots,\tau_M)$ and $\tau'=(\tau_1',\ldots,\tau_M')$.
By Proposition~\ref{cor:constant_f} ($\gamma=3$), we have
\begin{equation}
\delta_M(\tau\to\tau';\mathsf{A})\left({K-2\choose 2}f^2 - (K-3)(M-2)f + {M-2\choose 2}\right) = 0.
\label{eq:constant_f_gamma=3}
\end{equation}
Suppose to the contrary that $\delta_M(\tau\to\tau';\mathsf{A})\neq 0$.
If $M=3$, since $K>M$ and $f\neq 0$, then \eqref{eq:constant_f_gamma=3} holds only when $f=\frac{2}{K-2}$, which contradicts to the assumption that 
$f=\frac{n}{d}$ and (ii) $\gcd(\frac{K-2}{2},d)=1$ ($\frac{K-2}{2}$ must be an integer as $K-3$ is prime).
If $M>3$, \eqref{eq:constant_f_gamma=3} can be simplified to
\begin{equation} \label{eq:quadratic_equation}
{p+1\choose 2}f^2 - p(M-2)f + {M-2\choose 2} = 0,
\end{equation}
where $p=K-3$ is a prime number by (i).
Since the duty factor $f$ should be a rational number, the discriminant
\[
D:=p^2(M-2)^2-p(p-1)(M-2)(M-3)
\]
of the quadratic equation \eqref{eq:quadratic_equation} is a square number.
That is,
\begin{equation} \label{eq:discriminant_condition}
(M-2)(M-3)\equiv 0 \text{ (mod $p$).}
\end{equation}
Since $M<K$, \eqref{eq:discriminant_condition} holds only when $M-2=p$.
This implies that $f=1$ or $\frac{p-1}{p+1}$.
The former solution contradicts to the original assumption, while the latter one contradicts to $\gcd(\frac{K-2}{2},d)=1$.
All of above promise that $\delta_M(\tau\to\tau';\mathsf{A})=0$.
Therefore, $\theta_M(\tau_1,\ldots,\tau_M;\mathsf{A})$ is a constant of $\tau_1,\ldots,\tau_M$.
Hence, $\{\seq{s}_1,\ldots,\seq{s}_M\}$ is SI. This completes the inductive step.
\end{proof}

\subsection{The case of $\gamma=K-1,K-2,K-3$}
For larger $\gamma$, we first rewrite \eqref{eq:average_M} as
\begin{align*}
\mathbb{E}&(LR_1+\cdots+LR_M) \notag = \sum_{i=1}^{M} Lf_i - \sum_{i=1}^M \frac{i\,\theta_i(\tau_1,\ldots,\tau_M;\mathsf{A})\theta_{\geq\gamma+1-i}(\tau_{M+1}^*,\ldots,\tau_K^*;\mathsf{B})}{L},
\end{align*}
where $\theta_{\geq j}(\tau_{M+1}^*,\ldots,\tau_K^*;\mathsf{B}):=\sum_{k=j}^{K-M}\theta_{k}(\tau_{M+1}^*,\ldots,\tau_K^*;\mathsf{B})$.
By the same argument in previous subsections, we obtain the following results in parallel with Theorems~\ref{thm:mustbe SI 1}, \ref{thm:mustbe SI 2} and \ref{thm:mustbe SI 3}.
The proof is identical as before and is omitted here.

\begin{theorem}
If a set of $K$ binary sequences is TI with MPR capability $K-1$, then this sequence set is SI.
\label{thm:mustbe SI -1}
\end{theorem}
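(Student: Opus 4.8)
The plan is to reproduce, essentially verbatim, the inductive skeleton of Theorem~\ref{thm:mustbe SI 1}, while exploiting the fact that for $\gamma=K-1$ the defining sum collapses to a single term, so that none of the polynomial/divisibility analysis of Theorems~\ref{thm:mustbe SI 2} and~\ref{thm:mustbe SI 3} is required. Concretely, I would prove by induction on $M$ that every $M$-subset of $\mathcal{S}$ is SI for $M=2,3,\ldots,K-1$. The base case $M=2$ is exactly Theorem~\ref{pSI}, and once all $(K-1)$-subsets are shown to be SI, Proposition~\ref{prop:K-1toK} upgrades this to SI of the whole set. The substance is the inductive step, which I set up as before: write $\mathsf{A}=(1,\ldots,M)$ and $\mathsf{B}=(M+1,\ldots,K)$, let the shifts of $\mathsf{A}$ range over $\tau=(\tau_1,\ldots,\tau_M)$ while those of $\mathsf{B}$ are frozen at some $\tau^*$, and assume as induction hypothesis that every $(M-1)$-subset of $\mathcal{S}$ is SI.

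The key step is to feed $\gamma=K-1$ into the $\theta_{\geq}$ form of \eqref{eq:average_M}, namely $\mathbb{E}(LR_1+\cdots+LR_M)=\sum_{j=1}^{M}Lf_j-\frac{1}{L}\sum_{i=1}^{M}i\,\theta_i(\tau;\mathsf{A})\,\theta_{\geq\gamma+1-i}(\tau^*;\mathsf{B})$, and to observe the collapse. Since $\mathsf{B}$ contains only $K-M$ sequences, $\theta_{\geq K-i}(\tau^*;\mathsf{B})=\sum_{k=K-i}^{K-M}\theta_k(\tau^*;\mathsf{B})$ is an empty sum, hence $0$, whenever $i<M$, while for $i=M$ it equals $\theta_{K-M}(\tau^*;\mathsf{B})=H(\tau^*;\mathsf{B})$. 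Thus only the $i=M$ term survives and the identity reduces to $\mathbb{E}(L\sum_{j\leq M}R_j)=\sum_{j\leq M}Lf_j-\frac{M}{L}\theta_M(\tau;\mathsf{A})\,\theta_{K-M}(\tau^*;\mathsf{B})$. Because $\mathcal{S}$ is TI, the left-hand side is constant in $\tau$, so $\theta_M(\tau;\mathsf{A})\,\theta_{K-M}(\tau^*;\mathsf{B})$ is independent of $\tau$.

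To finish the step I would fix $\tau^*$ so that $\theta_{K-M}(\tau^*;\mathsf{B})>0$: since $R_i>0$ forces $f_i>0$ (a user that never transmits has zero throughput), no sequence is identically zero, so cyclically shifting each of $\seq{s}_{M+1},\ldots,\seq{s}_K$ to carry a $1$ in slot $0$ produces a common all-one slot, whence $\theta_{K-M}(\tau^*;\mathsf{B})\geq 1$. Dividing out then gives that $\theta_M(\tau;\mathsf{A})=H(\tau;\mathsf{A})$ is constant in $\tau$, i.e.\ the top-order correlation of $\mathsf{A}$ is SI; together with the induction hypothesis, which makes every lower-order correlation of $\mathsf{A}$ SI, the $M$-subset $\mathsf{A}$ is SI, closing the induction.

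I expect no serious obstacle here; this is the easiest of the large-$\gamma$ cases precisely because the single-term collapse removes all dependence on the actual duty-factor values, which is why (in contrast to $\gamma=2,3$) the statement needs neither equal duty factors nor any gcd/primality hypothesis. The only points requiring care are the index bookkeeping that confirms exactly one surviving term, and that this term is the full all-one correlation $H(\tau^*;\mathsf{B})$ rather than a partial $\theta_k$. It is worth noting the clean duality with the $\gamma=1$ proof of Theorem~\ref{thm:mustbe SI 1}, where the surviving factor is $\theta_0(\tau^*;\mathsf{B})$ (no $\mathsf{B}$-user transmits) in place of the present $\theta_{K-M}(\tau^*;\mathsf{B})$ (all $\mathsf{B}$-users transmit).
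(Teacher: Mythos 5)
Your proposal is correct and follows essentially the same route as the paper, whose omitted proof for $\gamma=K-1$ is precisely the inductive scheme of Theorem~\ref{thm:mustbe SI 1} applied to the $\theta_{\geq}$ rewriting of \eqref{eq:average_M}: for $\gamma=K-1$ only the $i=M$ term survives, giving $\delta_M(\tau\to\tau';\mathsf{A})\,\theta_{K-M}(\tau^*_{M+1},\ldots,\tau^*_K;\mathsf{B})=0$, and a choice of $\tau^*$ with $\theta_{K-M}>0$ exists because $R_i>0$ forces $f_i>0$. Your concluding steps---the all-one-slot choice as the dual of the $\theta_0$ choice in the $\gamma=1$ case, followed by Proposition~\ref{prop:K-1toK}---match the paper's intended argument exactly.
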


\begin{theorem}
Let $\mathcal{S}$ be a set of $K$ binary sequences with the same duty factor $f=\frac{n}{d}\neq 0,1$, which is TI with MPR capability $\gamma$.
\begin{enumerate}[(a)]
\item ($\gamma=K-2$.) $\mathcal{S}$ is SI if $\gcd(K-2,d)=1$.
\item ($\gamma=K-3$.) $\mathcal{S}$ is SI if (i) $K-3$ is prime, and (ii) $\gcd(\frac{K-2}{2},d)=1$.
\end{enumerate}
\end{theorem}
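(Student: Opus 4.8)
The plan is to reproduce, almost verbatim, the induction-on-subset-size scheme of Theorems~\ref{thm:mustbe SI 1}, \ref{thm:mustbe SI 2} and \ref{thm:mustbe SI 3}, the only structural change being that for $\gamma$ near $K$ one runs the argument on the complement form of \eqref{eq:average_M} (the rewriting in terms of $\theta_{\geq\gamma+1-i}$ displayed just above the statement) rather than on \eqref{eq:average_M} itself. Concretely, for both parts I claim by induction on $M$ that every $M$-subset of $\mathcal{S}$ is SI for $M=2,\ldots,K-1$; the base case $M=2$ is Theorem~\ref{pSI}, and once the $(K-1)$-subsets are SI, Proposition~\ref{prop:K-1toK} upgrades this to the whole set $\mathcal{S}$. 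Thus the entire content lies in the inductive step: assuming every $(M-1)$-subset is SI, I must show that changing the shifts of $\seq{s}_1,\ldots,\seq{s}_M$ from $\tau$ to $\tau'$ forces $\delta_M(\tau\to\tau';\mathsf{A})=0$, so that $\theta_M(\cdot;\mathsf{A})$ is shift-invariant.

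For the inductive step I would fix $\mathsf{A}=(1,\ldots,M)$ and $\mathsf{B}=(M+1,\ldots,K)$ and apply the reasoning of Lemma~\ref{lem:theta_less_than} to the complement form of \eqref{eq:average_M}. Since $\sum_i Lf_i$ is constant and $R_1+\cdots+R_M$ has zero variance, Lemma~\ref{lem:delta} (available because every $(M-1)$-subset is SI) together with a summation-by-parts identical to the derivation of \eqref{eq:delta_theta} yields an identity of the form $\delta_M(\tau\to\tau';\mathsf{A})\cdot\Theta=0$, where $\Theta$ is an alternating binomial combination of the $\theta_{\gamma-i}(\tau^*_{M+1},\ldots,\tau^*_K;\mathsf{B})$. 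Because all sequences share the duty factor $f$, I then invoke Proposition~\ref{cor:constant_f} to replace each $\theta_{\gamma-i}(\tau^*_{M+1},\ldots,\tau^*_K;\mathsf{B})$ by $\binom{K-M}{\gamma-i}f^{\gamma-i}(1-f)^{K-M-\gamma+i}$, turning $\Theta$ into a polynomial $P(f)$ and reducing everything to showing $P(f)\neq 0$ under the stated arithmetic hypotheses.

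The key observation, which is what makes the proof ``identical as before,'' is that the truncation $0\le\gamma-i\le K-M$ leaves only the terms $i\in\{M-2,M-1\}$ for $\gamma=K-2$ and $i\in\{M-3,M-2,M-1\}$ for $\gamma=K-3$, so $P(f)$ has degree $1$ in the first case and degree $2$ in the second. Collapsing its coefficients with the Vandermonde identity
\[
\binom{K-M}{2}+(K-M)(M-2)+\binom{M-2}{2}=\binom{K-2}{2}
\]
(and the trivial $(K-M)+(M-2)=K-2$) shows that $P(f)$ coincides, up to an overall sign, with the polynomial of Theorem~\ref{thm:mustbe SI 2}, respectively with \eqref{eq:constant_f_gamma=3}, after the substitution $M-2\leftrightarrow K-M$. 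For part (a) this gives $(K-2)f=K-M$, i.e.\ $f=\frac{K-M}{K-2}$, whose reduced denominator divides $K-2$ and is therefore excluded by $\gcd(K-2,d)=1$. For part (b) one obtains the quadratic
\[
\binom{K-2}{2}f^2-(K-3)(K-M)f+\binom{K-M}{2}=0,
\]
which is exactly \eqref{eq:quadratic_equation} with $M-2$ replaced by $K-M$; hence the primality of $p=K-3$, the square-discriminant condition $(K-M)(K-M-1)\equiv 0\ (\mathrm{mod}\ p)$, and the forbidden value $f=\frac{p-1}{p+1}$ are ruled out by $\gcd(\frac{K-2}{2},d)=1$ exactly as in Theorem~\ref{thm:mustbe SI 3}. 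In either case $P(f)\neq 0$ forces $\delta_M=0$, completing the induction.

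I expect the main obstacle to be the bookkeeping in the middle two paragraphs: verifying that the summation-by-parts on the $\theta_{\geq}$ form reproduces the alternating coefficients $(-1)^{M-i}\binom{M-2}{i-1}$ of \eqref{eq:delta_theta}, and that the surviving terms reassemble, through the Vandermonde collapse and with the signs tracked correctly, into the mirrored polynomial. Once that is checked, everything downstream is a transcription of the $\gamma=2,3$ arguments under $M-2\leftrightarrow K-M$; only the small boundary value $M=K-1$ (where $K-M=1$ makes $\binom{K-M}{2}=0$ and degenerates the quadratic to the linear relation $f=\frac{2}{K-2}$) needs a separate but entirely routine check in part (b), and similarly for $M=K-1$ in part (a).
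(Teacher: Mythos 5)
Your proposal is correct and is precisely the argument the paper intends: the paper omits this proof as ``identical as before'' after rewriting \eqref{eq:average_M} in the complement form with $\theta_{\geq\gamma+1-i}$, and your induction (base case Theorem~\ref{pSI}, inductive step via Lemma~\ref{lem:delta} and Proposition~\ref{cor:constant_f}, conclusion via Proposition~\ref{prop:K-1toK}) fills it in faithfully. Your mirrored polynomials check out on direct computation: for $\gamma=K-2$ the truncation leaves $(K-2)f=K-M$, and for $\gamma=K-3$ the Vandermonde collapse yields $\binom{K-2}{2}f^2-(K-3)(K-M)f+\binom{K-M}{2}=0$ with square-discriminant condition $(K-M)(K-M-1)\equiv 0 \pmod{K-3}$, exactly the $M-2\leftrightarrow K-M$ mirror of \eqref{eq:quadratic_equation}, and your separate treatment of the degenerate boundary $M=K-1$ (forcing $f=\frac{2}{K-2}$, excluded by $\gcd(\frac{K-2}{2},d)=1$) is the correct counterpart of the paper's separate $M=3$ case in Theorem~\ref{thm:mustbe SI 3}.
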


\section{Conclusion}
In this paper, considering the MPR capability $\gamma$ we investigate TI sequences, which produce the invariant throughput for any relative shifts.
Only one previous known result on TI sequences is that SI sequences must be TI for $\gamma=1$~\cite{SCSW}.
Considering some specific form of duty factors, this paper obtains that the length of TI sequences must be exponential in the number of users for any $\gamma$, and proves that some known constructions of SI sequences can be used to design optimal TI sequences for any $\gamma$.
In addition, we explore the bit structure of TI sequences by showing that they must be pairwise SI for any $\gamma$, and further be SI in many specific cases.
This tends to indicate that there are no shorter solutions to the TI problems than SI sequences.
To our knowledge, the existence of TI sequences which are not SI is still unknown.
Another aspect is to apply some known shorter sequences, such as wobbling sequences or CRT sequences, to promise a quasi-TI performance for MPR, which is of more practical interests in a realistic system.
We leave these problems to the interested readers.
Furthermore, having understood the fundamental behavior of MPR on TI sequences, we propose a practical identification and decoding mechanism, by incorporating packet header, RS code, and advanced PHY-layer blind packet separation algorithms.

\appendix

\subsection{Proof of (\ref{eq:average_2})} \label{sec:app_1}

We first construct the following four binary sequences:
\begin{enumerate}
  \item $\seq{s}_{\alpha_1}(t)=1$ if and only if $s_1(t+\tau_1)+s_2(t+\tau_2)=1$ for each $0\leq t \leq L-1$;
  \item $\seq{s}_{\alpha_2}(t)=1$ if and only if $s_1(t+\tau_1)=s_2(t+\tau_2)=1$ for each $0\leq t \leq L-1$;
  \item $\seq{s}_{\beta_1}(t)=1$ if and only if $\sum_{i=3}^{K} s_i(t+\tau^*_i) \leq \gamma-1$ for each $0\leq t \leq L-1$.
  \item $\seq{s}_{\beta_2}(t)=1$ if and only if $\sum_{i=3}^{K} s_i(t+\tau^*_i) \leq \gamma-2$ for each $0\leq t \leq L-1$.
\end{enumerate}
It is obvious that
\begin{align*}
& N(1|\seq{s}_{\alpha_1})=\theta_1(\tau_1,\tau_2;\mathsf{A}),\ \ \ \ \ \ \ \ \ \ \ \ \ \ N(1|\seq{s}_{\alpha_2})=\theta_2(\tau_1,\tau_2;\mathsf{A}),\\
& N(1|\seq{s}_{\beta_1})=\theta_{\leq \gamma-1}(\tau^*_3,\ldots,\tau^*_K;\mathsf{B}), \ \ \ \ N(1|\seq{s}_{\beta_2})=\theta_{\leq \gamma-2}(\tau^*_3,\ldots,\tau^*_K;\mathsf{B}).
\end{align*}
Then by \eqref{eq:Shum09} we have
\[
\sum_{\tau_{\alpha_1}=0}^{L-1} \sum_{\tau_{\beta_1}=0}^{L-1} N(1,1|\seq{s}_{\alpha_1}^{(\tau_{\alpha_1})},\seq{s}_{\beta_1}^{(\tau_{\beta_1})})=LN(1|\seq{s}_{\alpha_1})N(1|\seq{s}_{\beta_1}).
\]
The left-hand-side in the above identity is equal to $L^2\mathbb{E}(T_1(\tau_1,\tau_2))$.
Then we have $\mathbb{E}(T_1(\tau_1,\tau_2)) = \theta_1(\tau_1,\tau_2;\mathsf{A}) \theta_{\leq \gamma-1}(\tau^*_3,\ldots,\tau^*_K;\mathsf{B})/L$.

Similarly, we have
\[
\sum_{\tau_{\alpha_2}=0}^{L-1} \sum_{\tau_{\beta_2}=0}^{L-1} N(1,1|\seq{s}_{\alpha_2}^{(\tau_{\alpha_2})},\seq{s}_{\beta_2}^{(\tau_{\beta_2})})=LN(1|\seq{s}_{\alpha_2})N(1|\seq{s}_{\beta_2}),
\]
and thus $\mathbb{E}(T_2(\tau_1,\tau_2))= \theta_2(\tau_1,\tau_2;\mathsf{A}) \theta_{\leq \gamma-2}(\tau^*_3,\ldots,\tau^*_K;\mathsf{B})/L$.

\subsection{Proof of \eqref{eq:delta_theta_1} $\Rightarrow$ \eqref{eq:delta_theta}} \label{sec:app_2}

For convenience, in this proof $\delta_M(\tau\to\tau';\mathsf{A})$, $\theta_{k}(\tau^*_{M+1},\ldots,\tau^*_K;\mathsf{B})$ and $\theta_{\leq k}(\tau^*_{M+1},\ldots,\tau^*_K;\mathsf{B})$ will be abbreviated as $\delta_M$, $\theta_{k}$ and $\theta_{\leq k}$, respectively.
By Lemma~\ref{lem:theta_less_than},
\begin{align*}
0 =& ~\delta_M \sum_{i=1}^{M}(-1)^{M-i}{M-1\choose {i-1}} \theta_{\leq\gamma-i} \\
  =& ~\delta_M (-1)^{M-1} \sum_{i=1}^{M} (-1)^{i-1} {M-1\choose {i-1}} \theta_{\leq\gamma-i} \\
  =& ~\delta_M (-1)^{M-1} \sum_{i=1}^{M-1}(-1)^{i-1} \mathbf{A}_i \,(\theta_{\leq\gamma-i}-\theta_{\leq\gamma-i-1}) + \delta_M \,\mathbf{B} \,\theta_{\leq\gamma-M},
\end{align*}
where
\begin{align*}
\mathbf{A}_i = \sum_{j=1}^{i}(-1)^{j-1}{M-1\choose {i-j}},\, \mathbf{B} = \sum_{j=0}^{M-1}(-1)^j {M-1\choose {M-1-j}}.
\end{align*}
Binomial recursive formula states that
$${n\choose m} - {n-1\choose {m-1}} = {n-1\choose m}, \, \forall 1\leq m\leq n.$$
By replacing ${M-1\choose 0}$ by ${M-2\choose 0}$, we have $\mathbf{A}_i={M-2\choose {i-1}}$.
In addition, $\mathbf{B}=0$ by binomial theorem.
The above equation is therefore
\begin{align*}
\delta_M &\sum_{i=1}^{M-1} (-1)^{M-i} {M-2\choose{i-1}} \left(\theta_{\leq\gamma-i} - \theta_{\leq\gamma-i-1}\right) = \delta_M \sum_{i=1}^{M-1} (-1)^{M-i} {M-2\choose{i-1}} \theta_{\gamma-i}.
\end{align*}

\bibliographystyle{IEEEtran}
\bibliography{UI}
\end{document}